\title{Encoding of Predicate Subtyping with Proof Irrelevance
  in the \(\mathsf{\lambda\Pi}\)-calculus Modulo Theory}
\titlerunning{Predicate Subtyping with Proof Irrelevance in LPMT}
\author{Gabriel Hondet}{%
  Université Paris-Saclay, ENS Paris-Saclay, CNRS, Inria,
  Laboratoire Méthodes Formelles, 91190, Gif-sur-Yvette, France
}{gabriel.hondet@inria.fr}{}{}
\author{Frédéric Blanqui}{%
  Université Paris-Saclay, ENS Paris-Saclay, CNRS, Inria,
  Laboratoire Méthodes Formelles, 91190, Gif-sur-Yvette, France
}{frederic.blanqui@inria.fr}{}{}
\authorrunning{G. Hondet and F. Blanqui}
\keywords{Predicate Subtyping, Logical Framework, PVS, Dedukti, Proof Irrelevance}
\begin{document}
\maketitle

\begin{abstract}
  The $\lambda\Pi$-calculus modulo theory is a logical framework in
  which various logics and type systems can be encoded, thus helping
  the cross-verification and interoperability of proof systems based
  on those logics and type systems. In this paper, we show how to
  encode \emph{predicate subtyping} and \emph{proof irrelevance}, two
  important features of the \pvs proof assistant. We prove that this
  encoding is correct and that encoded proofs can be mechanically
  checked by \Dedukti, a type checker for the
  \(\mathrm{\lambda\Pi}\)-calculus modulo theory using rewriting.
\end{abstract}

\section{Introduction}

A substantial number of proof assistants can be used to develop formal proofs,
but a proof developed in an assistant cannot, in general,
be used in another one.
This impermeability generates redundancy since theorems are likely to have one
proof per proof assistant.
It also prevents adoption of formal methods by industry because of the lack
of standards and the difficulty to use adequately formal methods.

Logical frameworks are a part of the answer.
Because of their expressiveness, different logics and proof systems can be
stated in a common language.
The \(\mathrm{\lambda\Pi}\)-calculus modulo theory, or \lpme,
is such a logical framework.
It is the simplest extension of simply typed \(\lambda\)-calculus
with dependent types and arbitrary computation
rules.
Fixed-length vectors are a common example of dependent type,
that can be represented in the \(\mathrm{\lambda\Pi}\)-calculus as \(\forall n: \N, \operatorname{Vec}(n)\).
The \(\mathrm{\lambda\Pi}\)-calculus modulo theory
already allows to formulate first order logic,
higher order logic~\cite{Assaf15HOL} or proof systems based
on \emph{Pure Type Systems} \cite{cousineau07tlca} such as \toolstyle{Matita} \cite{assaf15phd},
\toolstyle{Coq} \cite{boespflug12pxtp-coqine} or \toolstyle{Agda} \cite{genestier20fscd}.

\pvs~\cite{PVS:semantics} is a proof assistant
that has successfully been used in collaboration by academics and industrials
to formalise and specify real world systems~\cite{Owre98pvsExperienceReport}.
More precisely, \pvs is an environment comprising a specification language,
a type checker and a theorem prover.
One of the specificities of \pvs is its ability to blend type checking with
theorem proving by requiring terms to validate arbitrary predicates
in order to be attributed a certain type.
This ability is a consequence of
\emph{predicate subtyping}~\cite{OSR98predicatesubtyping}.
It facilitates the development of specifications and provides
a more expressive type system which allows to encode more constraints.
For instance, one can define the inverse function
\(\operatorname{inv}: \R^{*} \to \R\), where \(\R^{*}\) is a predicate subtype
defined as reals which are not zero.

If predicate subtyping provides a richer type system,
it also makes type checking of specifications undecidable.
In~\cite{gilbert2018phd}, F.~Gilbert paved the way of the expression
of \pvs into \lpme: he formalised the core of \pvs
and provided a language of certificates for \pvs whose type checking
is decidable.
However, the encoding in \lpme of this language of certificates relies on
\emph{proof irrelevance}.

The following work proposes an encoding of proof irrelevant
equivalences into the \(\lambda\Pi\)-calculus modulo theory.  It also
inspects the completion of such equations into a confluent rewrite
system.  The resulting rewrite system can be used to provide an
encoding of \pvs into \Dedukti, a type-checker for the
\(\lambda\Pi\)-calculus modulo theory based on rewriting
\cite{assaf19draft}.

\paragraph*{Related work}

An encoding or ``simulation'' of predicate subtyping \textit{\`a la} \pvs
into \toolstyle{HOL} can be found in~\cite{Hurd01pshol}.
The objective of that work was to get some facilities provided by
predicate subtyping into \toolstyle{HOL} rather than providing
a language of certificates, and proof checking hence remains undecidable.
Moreover, predicate subtypes are not represented by types but by theorems.

In~\cite{Sozeau06subsetcoercion},
predicate subtyping is weakened into a language named \toolstyle{Russell}
to be then converted into \textsc{CIC}.
This conversion amounts to the insertion of coercions
and unsolved meta-variables, the latter embody \pvs \emph{type correctness conditions} (\tcc).
The equational theory used in the \textsc{CIC} encoding is richer
than ours since it includes surjective pairing
\(e = \teZpair{T}{U}{(\teZprol{T}{U}{e})}{(\teZpror{T}{U}{e})}\)
and \(\eta\)-equivalence
\(f = \teZabst{x}{f~x}\) in addition to proof irrelevance.

In~\cite{Werner08proofIrrelevance}, proof irrelevance
is embedded into Luo's \textsc{ECC}~\cite{Luo90ECC} and its dependent pairs.
Pairs and dependent pair types come in two flavours,
the proof irrelevant one and the normal one.
The flavour is noted by an annotation, and proof irrelevance is implemented by
a reduction which applies only on annotated pairs.
The article presents as well an application to \toolstyle{pvs}.

On a slightly more practical side,
the automated first-order prover \toolstyle{ACL2}~\cite{KaufmannM97ACL2}
reproduces the system of ``guards'' provided by predicate subtyping
into its logic based on \toolstyle{Common Lisp}
with the concept of \emph{gold symbols}.
Approximately, a symbol is gold if all its \tcc have been solved.

Some theories---often based on Martin-L\"of's Type Theory---blend
together a decidable (called \emph{definitional} or \emph{intensional}) equality
with an undecidable (said \emph{extensional}) equality.
In~\cite{Pfenning01iepi}, a judgement ``\(A\) is provable'' is introduced,
to say that a proof of \(A\) exists, but no attention is paid to what it is.
Similarly, \cite{DBLP:journals/corr/abs-1102-2405} introduces proof irrelevance
in Martin-L\"of's logical framework using a function to distinguish propositions
\(A\) from ``proof-irrelevant propositions \(\mathsf{Prf}(A)\)''.
While \(A\) can be inhabited by several normal terms,
\(\mathsf{Prf}(A)\) is inhabited by only one normal form noted \(\star\),
to which all terms of \(\mathsf{Prf}(A)\) reduce.
Still in Martin-L\"of's type theory,
\cite{Salvesen88thestrength} provides proof irrelevance
for predicate subtyping (here called \emph{subset types})
for two different presentations, one is intensional, and the other extensional.
The interested reader may have a look at
\toolstyle{Nuprl}~\cite{Constable86implementingmathematics},
an implementation of Martin-L\"of's Type Theory
with extensional equality and subset types.

Proof irrelevance has also been added to LF to provide a new system LFI
in~\cite{pfenning10refinement}, where proof irrelevance is used
in the context of \emph{refinement types}.
In LFI, proof irrelevance is not limited to propositions,
nor it is attached to a certain type:
terms are irrelevant based on the function they are applied to.
A similar system is implemented in \toolstyle{Agda}~\cite{agdaManual}.

More generally, concerning proof irrelevance in proof assistants,
\toolstyle{Coq} and \toolstyle{Agda}~\cite{2019proofIrrWithoutK}
each have a sort for proof irrelevant propositions
(\texttt{SProp} for \toolstyle{Coq} and \texttt{Prop} for
\toolstyle{Agda}~\cite{agdaManual}).
\toolstyle{Lean}~\cite{leansysdesc} is by design proof irrelevant,
and \toolstyle{Matita} supports proof irrelevance as
well~\cite[section 9.3]{matitatutorial}.

\paragraph*{Outline}

Encoding predicate subtyping requires a clear definition of it, which is done in
\cref{sec:pcert}.
Predicate subtyping is encoded into \lpme using the signatures provided in
\cref{sec:encoding}. This encoding is put in use into some examples as well.
The encoding is proved correct in \cref{sec:translation}:
any well typed term of the source language can be encoded into \lpme,
and its type in \lpme is the encoding of its type in the source language.
Finally, we show that a type checker for the \(\lambda\Pi\)-calculus modulo rewriting can be used to type check
terms that have been encoded as described in \cref{sec:encoding}.


\section{\pcert: A Minimal System With Predicate Subtyping}
\label{sec:pcert}

Because of its size, encoding the whole of \pvs cannot be achieved in one step.
Consequently, F.~Gilbert in his PhD~\cite{gilbert2018phd}
extracted, formalised and studied a subsystem of \pvs
which captures the essence of predicate subtyping named \pcert.
Unlike \pvs, \pcert{} contains proof terms,
which has for consequence that type checking is decidable in \pcert{}
while it is not in \pvs.\@
Hence \pcert{} is a good candidate to be a logical system
in which \pvs proofs and specifications can be encoded
to be rechecked by external tools.

In this paper, we use an equational presentation of \pcert{}, that is,
we use equations rather than reduction rules and
slightly change the syntax of terms.
We describe \pcert, as done in~\cite{gilbert2018phd},
namely the addition of predicate subtyping over simple type theory.

\subsection{Type Systems Modulo Theory}\label{sec:pts-modulo}

To describe \pcert and \lpme in a uniform way, we will use the notion of
\emph{Type Systems Modulo} described in \cite{blanqui01phd}. Type Systems
Modulo are an extension of \emph{Pure Type Systems}
~\cite{BarendregtH90} with symbols of fixed arity whose types are
given by a \emph{typing signature} \(\Sigma\), and an arbitrary
conversion relation \(\rlZdeq\) instead of just $\beta$-conversion
\(\equiv_\beta\).

The terms of such a system are characterised by a finite set of
\emph{sorts} \(\mathcal{S}\), a countably infinite set of variables
\(\mathcal{V}\) and a signature \(\Sigma\). The set of terms
\(\mathcal{T}(\Sigma, \mathcal{S}, \mathcal{V})\) is inductively
defined in \cref{fig:terms}.

\begin{figure}[ht]
  \begin{multline*}
    M,N,T,U ::= s \in \mathcal{S} \mid x \in \mathcal{V}
    \mid M~N \mid \teZabst{x}[T]{M} \mid \teZprod{x}{T}{U}
    \mid f(\overrightarrow{M})\\
    \text{with }
    \Sigma(f) = \left(\overrightarrow{x, T}, U, s\right)
  \end{multline*}
  \caption{Terms of the type system characterised by
    \(\mathcal{S}, \mathcal{V}\) and \(\Sigma\).}\label{fig:terms}
\end{figure}

The contexts are noted $\Gamma ::= \varnothing \mid \Gamma, v: T$
and the judgements \(\Gamma \vdash WF \) or \(\Gamma \vdash M : T\).
The typing rules are given in \cref{fig:pts-typing-rules} and depend on
\begin{itemize}
  \item \emph{axioms} \(\mathcal{A}\subseteq\mathcal{S}\times\mathcal{S}\) to type sorts;
  \item \emph{product rules} \(\mathcal{P}\subseteq\mathcal{S}\times\mathcal{S}\times\mathcal{S}\) to type dependent products;
  \item a typing signature \(\Sigma\) which defines the function
    symbols and how to type their applications;
  \item a convertibility relation \(\rlZdeq\).
\end{itemize}

\begin{figure}[ht]
\begin{mathpar}
  \prftree[l]{empty}{}{\varnothing \vdash WF}
  \and
  \prftree[l][r]{$v \not\in \Gamma$}{decl}{\Gamma \vdash T : s}{\Gamma, v : T \vdash WF}
  \and
  \prftree[l][r]{$v: T \in \Gamma$}{var}{\Gamma \vdash WF}{\Gamma \vdash v: T}
  \and
  \prftree[l]{conv}{\Gamma \vdash M: U}{\Gamma \vdash T:
    s}{T \rlZcov U}{\Gamma \vdash M: T}
  \and
  \prftree[l][r]{$(s_{1}, s_{2}) \in \mathcal{A}$}{sort}{\Gamma \vdash
    WF}{\Gamma \vdash s_1: s_2}
  \and
  \prftree[l][r]{$(s_{1}, s_{2}, s_{3}) \in \mathcal{P}$}{prod}{%
    \Gamma \vdash T: s_1
  }{%
    \Gamma, x: T \vdash U: s_2
  }{%
    \Gamma \vdash \teZprod{x}{T}{U}: s_3
  }
  \and
  \prftree[l]{abst}{\Gamma \vdash \teZprod{x}{T}{U}: s}{\Gamma, x: T \vdash M:
    U}{\Gamma \vdash \teZabst{x}[T]{M}: \teZprod{x}{T}{U}}
  \and
  \prftree[l]{app}{\Gamma \vdash M: \teZprod{x}{T}{U}}{\Gamma \vdash N: T}{%
    \Gamma \vdash M~N: \teZsubs{U}{x}{N}
  }
  \and
  \prftree[l][r]{\(\Sigma(f) = \left(\overrightarrow{x, T}, U, s\right)\)}{sig}{%
    \overrightarrow{x: T} \vdash U: s}{%
    \left(\Gamma \vdash t_i: \teZsubsp{T_i}{\left(x_j \mapsto t_j\right)_{j < i}}\right)_i
    }{\Gamma \vdash f(\vec{t}): \teZsubsp{U}{\overrightarrow{x \mapsto t}}}
\end{mathpar}
\caption{Typing rules of a \textsc{Type System Modulo}.}\label{fig:pts-typing-rules}
\end{figure}

\subparagraph*{Notations} Rewriting relations are noted \(\rlZrw_{R}\), where
\(R\) is a set of rewriting rules. \(\rlZrw_{R}\) is the closure of \(R\) by
substitution and context. \(\rlZcov_{R}\) is the symmetric, reflexive and
transitive closure of \(\rlZrw_{R}\). The substitution of \(x\) by \(N\) in
\(M\) is noted \(\teZsubs{M}{x}{N}\). We use a vectorised notation for products
\((\overrightarrow{x: T}) \teZarr U\) to represent the dependent product
\(\teZprodp{x_{1}}{T_{1}}\teZprodp{x_{2}}{T_{2}} \cdots \teZprodp{x_{n}}{T_{n}} U\);
and more generally for any construction that can be extended to a finite
sequence, such as a parallel substitution
\(\teZsubsp{M}{\overrightarrow{x \mapsto N}}\). A mapping
\(\Sigma(f) = (\overrightarrow{x: T}, U, s)\) can also be written
\(\overrightarrow{x: T} \vdash_{\Sigma} f(\vec{x}): U: s\).
For all relations on terms \(R\) and \(S\),
we write \(RS = \{(t, u) \mid \exists v, t R v \land v S u\}\) the composition of \(R\) and \(S\),
and \(R^{*}\) the reflexive and transitive closure of \(R\).

\subsection{Simple Type Theory}

\pvs and \pcert
are both based on simple type theory, which can be represented by the PTS
\lhol{}~\cite{BarendregtH90}:
\begin{itemize}
  \item \(\mathcal{S}^{\text{\lhol}} = \{\teZProp, \teZType, \teZKind\}\),
  \item \(\mathcal{A}^{\text{\lhol}} = \{(\teZProp, \teZType), (\teZType, \teZKind)\}\),
  \item \(\mathcal{P}^{\text{\lhol}} = \{(\teZProp, \teZProp, \teZProp),
    (\teZType, \teZType, \teZType),
    (\teZType, \teZProp, \teZProp)\}\),
  \item \(\Sigma^{\text{\lhol}} = \emptyset\),
  \item \(\rlZdeq^{\text{\lhol}}\) is the reflexive, transitive and symmetric closure of the
    \(\beta\)-equation
    \begin{equation}
      \label{eq:beta}\tag{\(\beta\)}
      ((\teZabst{x}{M})~N) = \teZsubs{M}{x}{N}
    \end{equation}
\end{itemize}

\subsection{Predicate Subtyping}\label{sec:pcert-predicate-subtyping}

Predicate subtyping has two main benefits for a specification language.
The first is to provide a richer type system
thanks to the entanglement of type-checking
and proof-checking.
In consequence, any property can by encoded in the type system,
which allows to easily create ``guards'' such as
\(\mathtt{tail}: \mathtt{nonempty\_stack} \to \mathtt{stack}\)
where \(\mathtt{nonempty\_stack}\) is a predicate subtype
defined from a predicate \texttt{empty?}.
It is also essential in the expression of mathematics:
the judgement \(M: T\) is akin to the statement \(M \in T\)
in the usual language of mathematics when \(T\) is a set defined by
comprehension such as \(E = \{n: \N \mid P(n)\}\).
With predicate subtyping, we can represent the set $E$ by the type
\((\teZpsub{\N}{P})\),
and the judgement
\(\Gamma \vdash M: \teZpsub{\N}{P}\)
is derivable if term \(M\)
contains a proof of \(P(n)\) for some $n$.
The other benefit of predicate subtyping,
which is essential in \pvs developments,
is that it separates the process of writing specifications
from the proving phase.
In \pvs, this separation appears through
\emph{type correctness conditions} (\tcc):
the development of specifications creates proof obligations
that may be solved at any time.
This separation is also visible in usual mathematical developments,
where if we want to prove that \(t \in E\),
we prove once that \(P(t)\) is valid to then forget the proof and simply use
\(t\).

The type system of \pcert{} can be seen as \lhol with a non empty signature
\(\Sigma^{\text{\pvs}}\) defined in \cref{fig:sig-pcert}
and a richer equivalence \(\pcZdeq\) that will be discussed in the next
paragraph.

\begin{figure}[ht]
  \begin{alignat}{3}
    T: \teZType, p: T \teZarr \teZProp
    &\vdash \teZpsub{T}{p}
    &&: \teZType &&: \teZKind\\
    T: \teZType, p: T \teZarr \teZProp, m : T, h: p\, m
    &\vdash \teZpair{T}{p}{m}{h}
    &&: \teZpsub{T}{p} &&: \teZType\\
    T: \teZType, p: T \teZarr \teZProp, m: \teZpsub{T}{p}
    &\vdash \teZprol{T}{p}{m}
    &&: T &&: \teZType\\
    T: \teZType, p: T \teZarr \teZProp, m: \teZpsub{T}{p}
    &\vdash \teZpror{T}{p}{m}
    &&: p\, (\teZprol{T}{p}{m}) &&: \teZType
  \end{alignat}
  \caption{Signature \(\Sigma^{\text{\pvs}}\) of \pcert.}\label{fig:sig-pcert}
\end{figure}

A predicate subtype \((\teZpsub{T}{U})\)
is defined from a \emph{supertype} \(T\) and predicate \(U\)
which binds a variable of type \(T\) to a proposition.
Terms inhabiting a predicate subtype \((\teZpsub{T}{U})\)
are built with the pair construction \((\teZpair{T}{U}{M}{N})\)
where \(M\) is a term of the supertype \(T\) and \(N\) is a proof of
\((U\,M)\).
While the pair construction allows to coerce a term from any type to
a predicate subtype, the converse,
that is the coercion from a type to its supertype is done with
\(\teZprolp\), the left projection of the pair.
The right projection, \(\teZprorp\),
provides a witness that the left projection of the pair validates
the predicate defining the subtype.
Unlike \pcert, PVS does not use coercions \(\teZpairp\), \(\teZprolp\) and
\(\teZprorp\). In PVS, subtyping is implicit: terms do not have a unique
type, and the choice of this type is left to the type checker.

\begin{remark}
  Unlike the original presentation of \pcert in~\cite{gilbert2018phd},
  this one annotates \(\teZprolp\) and \(\teZprorp\), using
  \(\teZprol{T}{p}{m}\) instead of \(\teZprolp\, m\)
  to ease the well-definedness proof of the translation of \pcert terms
  (\cref{prop:termination-translation}).
\end{remark}

\paragraph*{Equations and Proof Irrelevant Pairs}\label{par:pvs-cert-equivalence}

So far, no real difference has been evinced between \pcert and dependent pairs:
predicate subtype \((\teZpsub{T}{p})\) may be encoded as the dependent pair type
\(\Sigma x: T, p\, x\)~\cite[Definition 4.2.3]{gilbert2018phd}.
The difference lies in the equivalence relations and
the fact that \pcert implements \emph{proof irrelevance} in pairs.

The equivalence of \pcert is noted \(\pcZdeq\) and contains
\cref{eq:beta,eq:pcert-eq-pair-pi,eq:pcert-eq-prol}
which provide \emph{proof irrelevance}:
\begin{gather}
  \label{eq:pcert-eq-pair-pi}
  \teZpair{t}{u}{m}{h_{0}} = \teZpair{t}{u}{m}{h_{1}}\\
  \label{eq:pcert-eq-prol}
  \teZprol{t_{0}}{u_{0}}{(\teZpair{t_{1}}{u_{1}}{m}{h})} = m
\end{gather}

We will now motivate the use of these equations in \pcert.
Proofs contained in terms are essential for typing purposes.
On the other hand, these proofs are a burden regarding equivalence of
terms. Were these proofs taken into account
(as \(\rlZcov_{\beta}\) does),
too many terms would be distinguished.
For example, consider two terms \(t = \teZpair{\N}{Even}{2}{h}\)
and \(t' = \teZpair{\N}{Even}{2}{h'}\) typed as even numbers.
Then \(t\) and \(t'\) are not considered equal because they don't have the same
proof (\(h\) and \(h'\)) that 2 is even.
We end up with one even number 2 per proof that 2 is even.

As stated in~\cite{debruijn1994proofirr},
most mathematicians seek convertibility of \(t\) and \(t'\)
and care more about what \(h\) and \(h'\) prove than the proofs themselves.
To this end, \pcert has \emph{proof irrelevant} pairs:
proofs attached to terms are not taken into account when checking the
equivalence of two pairs.
This property is embedded in the equivalence relation \(\pcZdeq\) used in
the conversion rule of \pcert which must verify
\cref{eq:pcert-eq-pair-pi}.

\Cref{eq:pcert-eq-prol} allows the projection to compute,
but because of proof irrelevance, we cannot allow the right projection to
compute, otherwise, all terms of type \(\teZProp\) would be considered
equivalent.

A proof of $T\equiv_\beta U$ or \(T\pcZdeq U\) can use untyped
intermediate terms, which can be problematic when one wants to prove
some property on typed terms only. In the case of $\equiv_\beta$, the
problem is solved by using the fact that $\hookrightarrow_\beta$ is
confluent, that is
${\equiv_\beta}={\hookrightarrow_\beta^*\hookleftarrow_{\beta}^{*}}$. We
now prove a similar property for $\pcZdeq$:

\begin{lemma}[Properties of the \pcert conversion]
  \label{prop:pcert-eq-confluence}
  Let ${\rlZrw_{\beta\teZprolp}} = {\rlZrw_\beta \cup
    \rlZrw_{\teZprolp}}$ where $\rlZrw_{\teZprolp}$ is the closure by
  substitution and context of \cref{eq:pcert-eq-prol} oriented from
  left to right, and let $\leftrightarrow_{pi}$ be the closure by
  substitution and context of \cref{eq:pcert-eq-pair-pi}
  and \(=_{pi} = \leftrightarrow_{pi}^{*}\).

  For all relation on terms $R$, let $R^{ty}$ be the
  restriction of $R$ to typable terms. Then:
  \begin{itemize}
  \item ${\pcZdeq} \subseteq
    {\hookrightarrow_{\beta\teZprolp}^*=_{pi}\hookleftarrow_{\beta\teZprolp}^*}$
  \item $\hookrightarrow_{\beta\teZprolp}$ preserves typing: if $\Gamma\jgZpvs
    M:T$ and $M\hookrightarrow_{\beta\teZprolp}M'$, then $\Gamma\jgZpvs M':T$
  \item ${\pcZdeq^{ty}} \subseteq
    {\left(\rlZrw_{\beta\teZprolp}^{ty}\right)^*
      \left(\leftrightarrow_{pi}^{ty}\right)^*
      \left(\rlZwr_{\beta\teZprolp}^{ty}\right)^*}$,
  \end{itemize}
\end{lemma}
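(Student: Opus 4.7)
The plan is to reduce all three items to a single commutation lemma between $\hookrightarrow_{\beta\teZprolp}$ and $\leftrightarrow_{pi}$, combined with confluence of $\hookrightarrow_{\beta\teZprolp}$ and a standard subject reduction argument.

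The first step is a local commutation property: whenever $M \leftrightarrow_{pi} M'$ and $M \hookrightarrow_{\beta\teZprolp} N$, there exists $N'$ with $M' \hookrightarrow_{\beta\teZprolp}^{*} N'$ of length at most one and $N \leftrightarrow_{pi}^{*} N'$. The proof is by case analysis on the relative positions of the two rewritten subterms. When the $\beta\teZprolp$-redex and the proof position modified by $\leftrightarrow_{pi}$ are disjoint, the same reduction applies unchanged on $M'$. When the $\beta\teZprolp$-redex is strictly above the $\leftrightarrow_{pi}$ position, a $\beta$-step may duplicate or erase the pair whose proof has been rewritten, so several compensating $\leftrightarrow_{pi}$ steps may be needed on the right. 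When the $\beta\teZprolp$-redex sits strictly inside the proof being changed, $N$ and $M'$ differ only at proof positions and one concludes with $N' = M'$ and no reduction.

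Next I would invoke confluence of $\hookrightarrow_{\beta\teZprolp}$. The left-hand side of \cref{eq:pcert-eq-prol} (oriented from left to right) has a single non-variable proper subterm, a pair, which matches neither the $\beta$-redex, nor itself, nor the $\teZprolp$-redex, so the rewrite system is left-linear and orthogonal, hence confluent by the standard Tait--Martin-L\"of parallel-reduction argument. Combining commutation with confluence, any zigzag witnessing $M \pcZdeq M''$ is flattened by induction on its length into the shape $\hookrightarrow_{\beta\teZprolp}^{*}\;=_{pi}\;\hookleftarrow_{\beta\teZprolp}^{*}$, proving the first claim.

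Subject reduction for $\hookrightarrow_{\beta\teZprolp}$ is proved by induction on the typing derivation. The $\beta$ case is standard, via the substitution lemma and inversion on the type of an abstraction. For the $\teZprolp$-rule, inversion on the typing of $\teZprol{T}{p}{(\teZpair{T'}{p'}{m}{h})}$ yields $T \pcZdeq T'$ and $p \pcZdeq p'$, while the typing of the pair gives $\Gamma \jgZpvs m : T'$; the conversion rule then provides $\Gamma \jgZpvs m : T$, as required. The typed version of the first claim is then obtained by revisiting the previous arguments under a typability constraint: subject reduction lets a $\hookrightarrow_{\beta\teZprolp}$ chain between typable terms be realised through typable witnesses, and a $\leftrightarrow_{pi}$ step between typable terms admits typable intermediates because swapping the proof $h$ of $p\,m$ for any other proof of $p\,m$ preserves the typing of the enclosing pair. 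The main obstacle will be the commutation lemma, specifically the subcase in which a $\beta$-redex surrounds a pair whose proof is being changed: one must track every duplicated copy of the pair and emit one $\leftrightarrow_{pi}$ step per copy while ensuring typability is preserved at each intermediate.
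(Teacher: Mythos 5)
Your handling of the first two items is essentially the argument the paper itself uses: postpone \(\leftrightarrow_{pi}\) steps past \(\hookrightarrow_{\beta\teZprolp}\), obtain confluence of \(\hookrightarrow_{\beta\teZprolp}\) from left-linearity and the absence of critical pairs, flatten the conversion zigzag by induction on its length, and prove preservation of typing by inversion. One compression you should make explicit: in both the \(\beta\) and the \(\teZprolp\) cases of subject reduction, inversion only gives you \(\teZprod{x}{T}{U}\pcZdeq\teZprod{x}{T'}{U'}\), respectively \(\teZpsub{T}{P}\pcZdeq\teZpsub{T'}{P'}\); extracting the componentwise conversions \(T\pcZdeq T'\), \(P\pcZdeq P'\) is not inversion but injectivity of products and of \(\teZpsubp\) modulo \(\pcZdeq\), which is exactly where the first item (confluence modulo \(=_{pi}\), together with the observation that \(\leftrightarrow_{pi}\) never changes a head symbol) is needed.

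The genuine gap is in the third item. The equation \(\teZpair{t}{u}{m}{h_{0}}=\teZpair{t}{u}{m}{h_{1}}\) has a variable on each side that does not occur on the other, so a single \(\leftrightarrow_{pi}\) step may replace the proof component of a pair by an \emph{arbitrary}, possibly untypable, term; consequently the intermediate terms of a chain \(H_{0}=_{pi}H_{1}\) need not be typable even when \(H_{0}\) and \(H_{1}\) are, and a single step of that chain need not swap two proofs of the same proposition. Your justification (``swapping the proof \(h\) of \(p\,m\) for any other proof of \(p\,m\) preserves the typing of the enclosing pair'') presupposes exactly what fails here. What is missing is the reorganisation argument the paper carries out: first show that any \(=_{pi}\) chain collapses into a single parallel step \(\Leftrightarrow_{pi}\), i.e.\ \(\leftrightarrow_{pi}\) steps at pairwise disjoint positions (nested or repeated modifications of the same proof component merge into one step), so that the two typable endpoints differ only in the proof arguments of pairs whose first three arguments are unchanged; then serialise this parallel step, each intermediate being typable because every inserted proof is a subterm of the typable endpoint \(H_{1}\) and proves the same proposition as the proof it replaces. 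Your use of subject reduction to make the two \(\hookrightarrow_{\beta\teZprolp}\) chains typable is fine, but without the merging step the claim that the \(=_{pi}\) part can be realised through typable terms is unsupported, and that claim is the whole content of the third bullet.
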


\begin{proof}
  A relation $\hookrightarrow$ is confluent modulo some relation $E$
  if
  ${\hookleftarrow^*\hookrightarrow^*}\subseteq{\hookrightarrow^*E\hookleftarrow^*}$. If
  $E=\emptyset$, we simply say that $\hookrightarrow$ is confluent.

  First note that $\hookrightarrow_{\beta\teZprolp}$ is confluent since
  it can be seen as a Combinatory Reduction System that is orthogonal
  (i.e. whose rules are left-linear and non-overlapping)~\cite{klop93tcs}.

  We now prove that $\leftrightarrow_{pi}$ steps can be postponed:
  ${\leftrightarrow_{pi}\hookrightarrow_{\beta\teZprolp}}\subseteq{\hookrightarrow_{\beta\teZprolp}^==_{pi}}$,
  where $\hookrightarrow_{\beta\teZprolp}^=$ is the reflexive closure of
  $\hookrightarrow_{\beta\teZprolp}$. Assume that the
  $\leftrightarrow_{pi}$ step is at position $p$ and the
  $\hookrightarrow_{\beta\teZprolp}$ step is at position $q$. If $p$ and
  $q$ are disjoint, this is immediate. If $p$ is above $q$, we have
  $\teZpair{T}{U}{M}{N_1}\leftrightarrow_{pi}\teZpair{T}{U}{M}{N_2}$ and
  either $\teZpair{T}{U}{M}{N_2}\hookrightarrow_{\teZprolp} M$ or $\teZpair{T}{U}{M}{N_2}\hookrightarrow_{\beta\teZprolp} \teZpair{T'}{U'}{M'}{N_2'}$.
  In the first case, $\teZpair{T}{U}{M}{N_1}\hookrightarrow_{\teZprolp} M$. In the second case, $\teZpair{T}{U}{M}{N_1}\hookrightarrow_{\beta\teZprolp}^=\teZpair{T'}{U'}{M'}{N_1}$ $\leftrightarrow_{pi}\teZpair{T'}{U'}{M'}{N_2'}$.
  Finally,
  if $q$ is above $p$, we have
  $(\teZabst{x}[T]{M}){N}\leftrightarrow_{pi}(\teZabst{x}[T']{M'}){N'}$
  $\hookrightarrow_{\beta\teZprolp}\teZsubs{M'}{x}{N'}$
  and
  $(\teZabst{x}[T]{M}){N}\hookrightarrow_{\beta\teZprolp}\teZsubs{M}{x}{N}=_{pi}\teZsubs{M'}{x}{N'}$,
  and similarly in the case of a $\teZprolp$ step.
  
  Hence, (1) $\hookrightarrow_{\beta\teZprolp}$ is confluent modulo
  $=_{pi}$, that is,
  ${\pcZdeq} \subseteq {\hookrightarrow_{\beta\teZprolp}^*=_{pi}\hookleftarrow_{\beta\teZprolp}^*}$.

  We now prove that (2) $\hookrightarrow_\beta$ preserves typing. To
  this end, it suffices to prove that, if $\teZprod{x}{T}{U}$ and
  $\teZprod{x}{T'}{U'}$ are typable, and $\teZprod{x}{T}{U}\pcZdeq
  \teZprod{x}{T'}{U'}$, then $T\pcZdeq T'$ and $U\pcZdeq U'$
  (see~\cite{blanqui05mscs} for more details), which follows from (1).

  We now prove that (3) $\hookrightarrow_{\teZprolp}$ preserves
  typing. Assume that $\teZprol{T_{0}}{P_{0}}{(\teZpair{T_{1}}{P_{1}}{M}{N})}$
  is of type
  $C$. By inversion of typing rules,
  $\teZpair{T_{1}}{P_{1}}{M}{N}$ is of type $\teZpsub{T_{0}}{P_{0}}$
  and ${T_{0}}\pcZdeq C$.
  By inversion again, $M$ is of type $T_{1}$ and
  $\teZpsub{T_{0}}{U_{0}} \pcZdeq \teZpsub{T_{1}}{P_{1}}$. By (1),
  $T_{0} \pcZdeq T_{1}$ and $P_{0} \pcZdeq P_{1}$.
  Therefore, $M$ is of type $C$.

  Next, note that (4) ${=_{pi}}={\Leftrightarrow_{pi}}$ where
  $\Leftrightarrow_{pi}$ consists in applying several
  $\leftrightarrow_{pi}$ steps at disjoint
  positions. Indeed, if
  $t=\teZpair{T}{P}{M}{N_1} \leftrightarrow_{pi} u =
  \teZpair{T}{P}{M}{(\ldots \linebreak[1]
    (\teZpair{T'}{P'}{M'}{N_1'})\linebreak[1]\ldots)} \linebreak[1]
  \leftrightarrow_{pi} v =
  \teZpair{T}{P}{M}{(\ldots(\teZpair{T'}{P'}{M'}{N_2'})\ldots)}$,
  then $t\leftrightarrow_{pi}v$ as well.

  Moreover, we have (5)
  ${\Leftrightarrow_{pi}^{ty}}={(\leftrightarrow_{pi}^{ty})^*}$. Indeed,
  $A\Leftrightarrow_{pi}^{ty}B$ means that we can obtain $B$ from $A$
  by replacing some subterms of $A$, that are typable since $A$ is
  typable, by some subterms of $B$, that are typable since $B$ is
  typable.
  
  We can now conclude as follows. Assume that
  $A\pcZdeq^{ty}B$. By (1), there are $A'$ and $B'$ such that
  $A\hookrightarrow_{\beta\teZprolp}^*A'=_{pi}B'\hookleftarrow_{\beta\teZprolp}^*B$. By
  (2), (3), (4) and (5),
  $A{(\hookrightarrow_{\beta\teZprolp}^{ty})^*}A'
  {(\leftrightarrow_{pi}^{ty})^*}B'{(\hookleftarrow_{\beta\teZprolp}^{ty})^*}B$.
\end{proof}


\section{Encoding \pcert in \lpme}\label{sec:encoding}

We provide an encoding of \pcert{} into the logical framework \lpme{}.
This encoding allows to express terms of \pcert{} into \lpme.
Because logical frameworks strive to remain minimal,
constructions such as \(\teZpairp\) or \(\teZpsubp\) are not built-in:
they must be expressed into the language of the logical framework
through an encoding.
We hence define the symbols allowing to emulate predicate subtyping
using the terms of \lpme.

\paragraph*{Definition of \lpme}
\lpme is the family of Type Systems Modulo whose sorts, axioms and
product rules are:
\begin{itemize}
  \item sorts
    \(\mathcal{S}^{\mathrm{\lambda\Pi}} = \{\lpZTYPE, \lpZKIND\}\),
  \item axiom \(\mathcal{A}^{\mathrm{\lambda\Pi}} = \{(\lpZTYPE, \lpZKIND)\}\),
  \item product rules \(\mathcal{P}^{\mathrm{\lambda\Pi}} =
    \{(\lpZTYPE, \lpZTYPE, \lpZTYPE),
    (\lpZTYPE, \lpZKIND, \lpZKIND)\}\).
\end{itemize}

\subsection{Encoding Simple Type Theory}\label{sec:encoding-lhol}

The encoding of \lhol given in
\cref{fig:lhol-in-lpme,fig:lhol-in-lpme-eqs}
follows the method settled in~\cite{cousineau07tlca}
for pure type systems.

In the following, we write
the function symbols of a signature in
\textcolor{light-blue-silk}{blue} and the other
constructions of \lpme in black, to better distinguish them.

The general idea is to manipulate types and terms of \lhol
as terms of \lpme.
Sorts are both objectified as \(\ecZtype\) and \(\ecZprop\)
and encoded as types by \(\ecZKind\), \(\ecZType\) and \(\ecZProp\)
in \cref{sym:Kind,sym:Set,sym:Bool,sym:set,sym:bool}.
Sorts as types are used to type sorts as objects to encode
the axioms in \(\mathcal{A}\).
Terms of type \(\teZType\) are encoded
as terms of type \(\ecZType\).
These encoded types can be interpreted as \lpme{} types with
function \(\ecZEl\) \labelcref{sym:El}.
Similarly, propositions are reified as terms of type \(\ecZprop\)
and interpreted by function \(\ecZPrf\).
For instance, given a \lhol type \(T\) and a \lhol proposition \(P\)
both encoded as \lpme terms,
the abstractions \(\teZabst{x}[\ecZEl T]{x}\) and
\(\teZabst{h}[\ecZPrf P]{h}\) are valid \lpme{} terms.
The signature exposed in \cref{fig:lhol-in-lpme} is noted
\(\Sigma^{\text{\lhol}}\).

\cref{eq:eq-lhol-arrd,eq:eq-lhol-impd,eq:eq-lhol-forall} are used to map encoded products
to \lpme products.
\Cref{eq:eq-lhol-bool} makes sure that the objectified
sort \(\ecZprop\) is the same as the sort \(\ecZProp\) when interpreted as a
type.

\begin{figure}[ht]
\begin{alignat}{3}
  &\vdash \ecZKind&&: \lpZTYPE&&: \lpZKIND
  \label{sym:Kind}
  \\
  &\vdash \ecZType&&: \lpZTYPE&&: \lpZKIND
  \label{sym:Set}
  \\
  \label{sym:Bool}
  &\vdash \ecZProp&&: \lpZTYPE&&: \lpZKIND
  \\
  &\vdash \ecZtype &&: \ecZKind&&: \lpZTYPE
  \label{sym:set}
  \\
  &\vdash \ecZprop&&: \ecZType&&: \lpZTYPE
  \label{sym:bool}
  \\
  t: \ecZType &\vdash \ecZEl t&&: \lpZTYPE&&: \lpZKIND
  \label{sym:El}
  \\
  p: \ecZProp &\vdash \ecZPrf p&&: \lpZTYPE&&: \lpZKIND
  \label{sym:Prf}
  \\
  \label{sym:lhol-forall}
  t: \ecZType, p: \ecZEl t \teZarr \ecZProp &\vdash \ecZfa t\, p&&: \ecZProp&&: \lpZKIND
  \\
  \label{sym:lhol-impd}
  p: \ecZProp, q: \ecZPrf p \teZarr \ecZProp &\vdash p \ecZimpd q&&: \ecZProp&&: \lpZKIND
  \\
  \label{sym:lhol-arrd}
  t: \ecZType, u: \ecZEl t \teZarr \ecZType &\vdash t \ecZarrd u&&: \ecZType&&: \lpZKIND
\end{alignat}
\caption{Signature \(\Sigma^{\text{\lhol}}\) of the encoding of \lhol into
  \lpme.}\label{fig:lhol-in-lpme}
\end{figure}

\begin{figure}[ht]
  \begin{align}
    \label{eq:eq-lhol-bool}
    \ecZEl \ecZprop &= \ecZProp
    \\
    \label{eq:eq-lhol-forall}
    \ecZPrf (\ecZfa t\, p) &= \teZprod{x}{\ecZEl t}{\ecZPrf (p\, x)}
    \\
    \label{eq:eq-lhol-impd}
    \ecZPrf (p \ecZimpd q) &= \teZprod{h}{\ecZPrf p}{\ecZPrf (q\, h)}
    \\
    \label{eq:eq-lhol-arrd}
    \ecZEl (t \ecZarrd u) &= \teZprod{x}{\ecZEl t}{\ecZEl (u\, x)}
  \end{align}
  \caption{Equations of the encoding of \lhol into \lpme.}\label{fig:lhol-in-lpme-eqs}
\end{figure}

\subsection{Encoding Predicate Subtyping}

Predicate subtypes are defined in \cref{eq:encoding-psub} as encoded types
(i.e.\ terms of type \(\ecZType\)) built
from encoded type \(t\) and predicate defined on \(t\).
Pairs are encoded in \cref{eq:encoding-pair},
where the second argument is the predicate that defines the type of the pair.
The two projections are encoded in
\cref{eq:encoding-proj-l,eq:encoding-proj-r},
and we note the signature of \cref{fig:pcert-encoding}
\(\Sigma^{\ecZpsub}\).

\begin{figure}[ht]
\begin{alignat}{3}
  \label{eq:encoding-psub}
  t: \ecZType, p: \ecZEl t \teZarr \ecZProp
  &\vdash
  \ecZpsub t\, p &&: \ecZType &&: \lpZTYPE
  \\
  \label{eq:encoding-pair}
  t: \ecZType, p: \ecZEl t \teZarr \ecZProp, m: \ecZEl t, h: \ecZPrf (p\, m)
  &\vdash
  \ecZpair t\, p\, m\, h
  &&: \ecZEl (\ecZpsub t\, p)
  &&: \lpZTYPE
  \\
  \label{eq:encoding-proj-l}
  t: \ecZType, p: \ecZEl t \teZarr \ecZProp, m: \ecZEl (\ecZpsub t\,p)
  &\vdash
  \ecZfst t\,p\,m &&: \ecZEl t &&: \lpZTYPE
  \\
  \label{eq:encoding-proj-r}
  t: \ecZType, p: \ecZEl t \teZarr \ecZProp, m: \ecZEl (\ecZpsub t\,p)
  &\vdash \ecZsnd t\,p\,m &&: \ecZPrf (p\,(\ecZfst t\,p\,m)) &&: \lpZTYPE
\end{alignat}
\caption{Signature \(\Sigma^{\ecZpsub}\) of the encoding of predicate
  subtyping into \lpme.}\label{fig:pcert-encoding}
\end{figure}

The signature used to encode \pcert into \lpme
is \(\sigPcertEnc =
\Sigma^{\text{\lhol}} \cup \Sigma^{\ecZpsub}\).
The terms of the encoding are thus the terms of
\(\mathcal{T}(\sigPcertEnc,
\mathcal{S}^{\mathrm{\lambda\Pi}}, \mathcal{V})\).
The typing rules are those of \lpme with the signature
\(\sigPcertEnc\) and
the congruence \(\lpZdeq\) generated by
\cref{eq:pcert-eq-pair-pi,eq:pcert-eq-prol,eq:beta,eq:eq-lhol-bool,%
  eq:eq-lhol-forall,eq:eq-lhol-impd,eq:eq-lhol-arrd}
where, in \cref{eq:pcert-eq-pair-pi,eq:pcert-eq-prol},
\(\teZpsubp\), \(\teZpairp\) and \(\teZprolp\) (\pcert symbols in black)
are replaced by \(\ecZpsub, \ecZpair\) and \(\ecZfst\) (\lpme symbols in blue).

\subsection{Translation of \pcert Terms Into \lpme Terms}

\begin{definition}[Translation]
  Let \(\Gamma\) be a well formed context.
  \begin{itemize}
    \item The \emph{term translation} of the terms \(M\)
      typable in $\Gamma$, noted \(\trZobj{M}_{\Gamma}\),
      is defined in \cref{fig:lhol-to-lpme,fig:pcert-to-lpme}.

    \item The \emph{type translation} of \(\teZKind\) and the terms
      \(M\) typable by a sort in $\Gamma$,
      noted \(\trZtyp{M}_{\Gamma}\),
      is defined in \cref{fig:type-translation}.

    \item The \emph{context translation}
      \(\trZctx{\Gamma}\) is defined by induction on \(\Gamma\) as
      \begin{equation*}
        \trZctx{\emptyset} = \emptyset; \quad
        \trZctx{\Gamma, x: T} = \trZctx{\Gamma}, x: \trZtyp{T}_{\Gamma}
      \end{equation*}
  \end{itemize}
  \end{definition}

\begin{proposition}\label{prop:termination-translation}
  The translation function \(\trZobj{\cdot}_{\cdot}\) that maps a
  context and a \pcert{} term typable in this context to a \lpme{}
  term is well-defined.
\end{proposition}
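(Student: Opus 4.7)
The plan is to prove well-definedness by simultaneous induction on the syntactic structure of the input, handling the term translation \(\trZobj{\cdot}_\cdot\), the type translation \(\trZtyp{\cdot}_\cdot\), and the context translation \(\trZctx{\cdot}\) together. For each constructor of a \pcert{} term, one checks that exactly one clause in \cref{fig:lhol-to-lpme,fig:pcert-to-lpme,fig:type-translation} applies, that each recursive call is made on a proper subterm typable in the (possibly extended) context invoked, and that the result is a syntactically well-formed \lpme{} term. Termination of the recursion is immediate since every recursive call is on a strictly smaller subterm, and the context translation recurses on the length of \(\Gamma\).

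The delicate cases are those whose translation consults information beyond the immediate syntax, the prime example being the product \(\teZprod{x}{T}{U}\), whose translation must select among the encoded arrows \(\ecZarrd\), \(\ecZimpd\), and \(\ecZfa\) (\cref{sym:lhol-arrd,sym:lhol-impd,sym:lhol-forall}) according to the sorts of \(T\) and \(U\), and therefore according to \(\mathcal{P}^{\text{\lhol}}\). For each such case I would check (i) that inversion of rule \textsc{prod} supplies a triple \((s_1,s_2,s_3)\in\mathcal{P}^{\text{\lhol}}\) determining a unique clause, and (ii) that this choice is independent of which typing derivation is used, i.e.\ that the sort assigned to a well-typed \pcert{} expression is unique modulo \(\pcZdeq\). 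For the \pcert{}-specific constructors \(\teZpair{T}{P}{M}{N}\), \(\teZprol{T}{P}{M}\), and \(\teZpror{T}{P}{M}\), the \(T\) and \(P\) annotations on \(\teZprolp\) and \(\teZprorp\) — introduced in the remark preceding this proposition for exactly this purpose — mean that translation is fixed by syntax alone and no derivation inspection is needed.

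The main obstacle is point (ii): proving sort uniqueness in \pcert{}. This is a standard uniqueness-of-sorts argument that proceeds by induction on typing using inversion lemmas, and it ultimately rests on \cref{prop:pcert-eq-confluence}, whose last clause guarantees that any two convertible products in \pcert{} have convertible domains and codomains, so that any two typing derivations of the same term yield equivalent sort data. With this ingredient at hand, every clause of the translation is uniquely applicable to a typable term, the simultaneous induction closes, and the three translation functions are jointly well-defined; in particular \(\trZobj{\cdot}_\cdot\) is a total function on pairs of a well-formed context and a term typable in it.
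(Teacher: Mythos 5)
Your proposal follows essentially the same route as the paper: termination is justified by the strict subterm decrease of the recursive calls, and the sort-dependent clauses (product case, choice of \(\ecZEl\) versus \(\ecZPrf\)) are made unambiguous by uniqueness of types/sorts up to \(\pcZdeq\), which the paper derives from \cref{prop:pcert-eq-confluence} together with a cited lemma of Blanqui's thesis rather than re-proving it, and which likewise explains the annotations on \(\teZprolp\) and \(\teZprorp\). One small correction: injectivity of products modulo conversion is not the last clause of \cref{prop:pcert-eq-confluence} but is established inside its proof as a consequence of confluence modulo \(=_{pi}\) (its first item), from which the fact you need does indeed follow.
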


\begin{proof}

  After \cref{prop:pcert-eq-confluence} and~\cite[Lemma
    41]{blanqui01phd}, the types of a term are unique up to
  equivalence. Moreover, the arguments of the translation function are
  decreasing with respect to the (strict) subterm relation.
\end{proof}

\begin{figure}[ht]
  \begin{align*}
    \trZobj{x}_{\Gamma} &= x \\
    \trZobj{\teZProp}_{\Gamma} &= \ecZprop \\
    \trZobj{\teZType}_{\Gamma} &= \ecZtype \\
    \trZobj{M\, N}_{\Gamma} &=
    \trZobj{M}_{\Gamma}\ \trZobj{N}_{\Gamma} \\
    \trZobj{\teZabst{x}[T]{M}}_{\Gamma} &=
    \teZabst{x}[\ecZEl~\trZobj{T}_{\Gamma}]{\trZobj{M}_{\Gamma, x: T}}\\
    \trZobj{\teZprod{x}{T}{U}}_{\Gamma} &=
    \trZobj{T}_{\Gamma} \ecZarrd
    \left(\teZabst{x}[\trZtyp{T}_{\Gamma}]{\trZobj{U}_{\Gamma,x: T}}\right)\\
    &\phantom{=} \text{when } \Gamma \jgZpvs T: \teZType \text{ and }
    \Gamma,x: T \jgZpvs U: \teZType \\
    \trZobj{\teZprod{x}{T}{P}}_{\Gamma} &=
    \ecZfa \trZobj{T}_{\Gamma}\,
    \left(\teZabst{x}[\trZtyp{T}_{\Gamma}]{\trZobj{P}_{\Gamma, x: T}}\right)\\
    &\phantom{=} \text{when } \Gamma \jgZpvs T: \teZType
    \text{ and } \Gamma, x: T \jgZpvs P: \teZProp\\
    \trZobj{\teZprod{h}{P}{Q}}_{\Gamma} &=
    \trZobj{P}_{\Gamma} \ecZimpd
    \left(\teZabst{h}[\trZtyp{P}_{\Gamma}]{\trZobj{Q}_{\Gamma, h: P}}\right)\\
    &\phantom{=} \text{when } \Gamma \jgZpvs P: Prop
    \text{ and } \Gamma, h: P \jgZpvs Q: \teZProp \\
  \end{align*}
  \caption{Translation from \lhol to \lpme.}\label{fig:lhol-to-lpme}
\end{figure}

\begin{figure}
  \begin{minipage}{0.5\textwidth}
  \begin{align*}
    \trZobj{\teZpsub{T}{P}}_{\Gamma} &=
    \ecZpsub~\trZobj{T}_{\Gamma}\, \trZobj{P}_{\Gamma}
    \\
    \trZobj{\teZpair{T}{P}{M}{N}}_{\Gamma} &=
    \ecZpair \trZobj{T}_{\Gamma}\, \trZobj{P}_{\Gamma}\,
    \trZobj{M}_{\Gamma}~\trZobj{N}_{\Gamma}
  \end{align*}
  \end{minipage}
  \begin{minipage}{0.5\textwidth}
    \begin{align*}
    \trZobj{\teZprol{T}{P}{M}}_{\Gamma}
      &=
        \ecZfst \trZobj{T}_{\Gamma}\, \trZobj{P}_{\Gamma}\,
        \trZobj{M}_{\Gamma}
    \\
    \trZobj{\teZpror{T}{P}{M}}_{\Gamma}
      &= \ecZsnd \trZobj{T}_{\Gamma}\, \trZobj{P}_{\Gamma}\, \trZobj{M}_{\Gamma}
    \end{align*}
  \end{minipage}
  \caption{Translation from \pcert to \lpme.}\label{fig:pcert-to-lpme}
\end{figure}

\begin{figure}
  \begin{minipage}{0.5\textwidth}
    \begin{align*}
      \trZtyp{T}_{\Gamma} &= \ecZEl\ \trZobj{T}_{\Gamma} & \text{when } \Gamma \jgZpvs T:\teZType;\\
      \trZtyp{T}_{\Gamma} &= \ecZPrf\ \trZobj{T}_{\Gamma} & \text{when } \Gamma \jgZpvs T:\teZProp;
    \end{align*}
  \end{minipage}
  \begin{minipage}{0.5\textwidth}
    \begin{align*}
      \trZtyp{\teZKind} &= \ecZKind\\
      \trZtyp{\teZType} &= \ecZType
    \end{align*}
  \end{minipage}
  \caption{Translation of types from \pcert to
    \lpme.}\label{fig:type-translation}
\end{figure}

\subsection{Examples of Encoded Theories}

We provide here some examples that take advantage of proof irrelevance
or predicate subtyping.
While these examples could have been presented in \pcert,
we unfold them into the encoding of \pcert into \lpme to show how it can
be used in practice.
All examples are available as \Dedukti
files\footnote{directory \texttt{paper} of
\url{https://github.com/Deducteam/personoj} rev.\ 9807710 (published on Feb. 7, 2021)}
and can be type-checked with
\toolstyle{Lambdapi}\footnote{\url{https://github.com/Deducteam/lambdapi},
  rev.\ 9a90b1be (published on Feb. 18, 2021)}.
In the examples, the first two arguments of
\(\ecZfst\), \(\ecZpair\) and \(\ecZsnd\) are implicit.

\begin{example}[Stacks with predicate subtypes]
This example comes from the language reference manual of
\pvs~\cite{PVS:language} and illustrates the use of predicate
subtyping and the generation of \tcc
through a specification of stacks in \cref{fig:stacks}.
{
  \newcommand\stack{\slcol{\mathrm{stack}}}
  \renewcommand\empty{\slcol{\mathrm{empty}}}
  \newcommand\nonemptyStackP{\slcol{\operatorname{nonempty\_stack?}}}
  \newcommand\nonemptyStack{\slcol{\mathrm{nonempty\_stack}}}
  \newcommand\stackPush{\operatorname{\slcol{push}}}
  \newcommand\stackPop{\operatorname{\slcol{pop}}}
  \newcommand\pushTopPop{\operatorname{\slcol{push\_top\_pop}}}
  \newcommand\popPush{\operatorname{\slcol{pop\_push}}}
  \newcommand\poptPusht{\operatorname{\slcol{pop2push2}}}
  \newcommand\semi{\exZsemi}

  Predicate subtyping is used to define the type of nonempty stacks,
  which allows the function \(\stackPop\) to be total.
  Symbol \(\popPush\) is an axiom that uses Leibniz equality \(=\) on stacks.
  In the definition of the theorem \(\poptPusht\),
  term \(?_{0}\) is a meta-variable that must be instantiated with a proof
  that the first argument of the pair is not empty, and represents,
  in the encoding, the \tcc generated by \pvs.
  We can thus see that the concept of \tcc of \pvs has a clear and explicit
  representation in the encoding, allowing its benefits to be transported to
  \lpme.
  \begin{figure}
    \begin{gather*}
      \exZsym \stack: \ecZType \semi \hspace{6em}
      \exZsym \empty: \ecZEl \stack \semi \hspace{6em}
      \exZsym t: \ecZType \semi\\
      \exZsym \nonemptyStackP (s: \ecZEl \stack) \coloneqq s \ne \empty \semi \\
      \exZsym \nonemptyStack \coloneqq \ecZpsub \nonemptyStackP \semi \\
      \exZsym \stackPush: \ecZEl \stack \teZarr \ecZEl t \teZarr \ecZEl
      \nonemptyStack \semi\\
      \exZsym \stackPop: \ecZEl \nonemptyStack \teZarr \ecZEl \stack \semi \\
      \exZsym \popPush (x: \ecZEl t) (s: \ecZEl \stack) : \ecZPrf (\stackPop (\stackPush x\, s) = s)
      \semi
      \\
      \begin{aligned}
        \exZsym& \poptPusht (x\, y: \ecZEl t) (s: \ecZEl \stack)\\
              :& \ecZPrf (\stackPop (\ecZpair\, (\stackPop (\stackPush x\,
              (\ecZfst (\stackPush y\, s))))\, ?_{0}) = s) \coloneqq \dots \semi
      \end{aligned}
    \end{gather*}

    \caption{Specification for stacks.}\label{fig:stacks}
  \end{figure}
}
\end{example}

\begin{example}[Bounded lists and proof irrelevance]
{
  \newcommand\zero{\exZzero}
  \newcommand\suc{\exZsuc}
  \renewcommand\symbol{\exZsym}
  \newcommand\semi{\exZsemi}
  \newcommand\bound{\mathrm{\slcol{bound}}}
  \newcommand\bounded{\operatorname{\slcol{bounded}}}
  \newcommand\blist{\operatorname{\slcol{blist}}}
  \newcommand\bnil{\operatorname{\slcol{bnil}}}
  \newcommand\bcons{\operatorname{\slcol{bcons}}}

This example is inspired by sorted lists in the \toolstyle{Agda} manual
\cite{agdaManual}\footnote{\url{https://agda.readthedocs.io/en/v2.5.4/language/irrelevance.html}}.
Because we have not encoded dependent types,
we cannot encode the type of lists bounded by a variable.
We thus declare the bound in the signature.
The specification is given in \cref{fig:sorted-lists-spec}.

We first notice that the predicate subtype allows to encode
the proof \(head \leq \bound\) passed as a standalone argument in
\toolstyle{Agda} in the type of an argument in our encoding,
providing a shorter type for \(\bcons\).
  \begin{figure}
    \begin{minipage}{0.4\textwidth}
      \begin{gather*}
        \symbol \zero: \ecZEl \N \semi\\
        \symbol \suc (n: \ecZEl \N): \ecZEl \N \semi\\
        \symbol \leq (n\, m: \ecZEl \N): \ecZProp \semi
      \end{gather*}
    \end{minipage}
    \begin{minipage}{0.6\textwidth}
      \begin{gather*}
        \symbol \bound \coloneqq \ldots\semi\\
        \symbol \blist: \ecZType\semi\\
        \symbol \bnil: \ecZEl \blist\semi
      \end{gather*}
    \end{minipage}
    \begin{gather*}
      \symbol \bounded \coloneqq \ecZpsub(\teZabst{n}{n \leq \bound})\semi\\
      \symbol \bcons (head: \ecZEl \bounded)
      (tail: \ecZEl \blist): \ecZEl \blist\semi
    \end{gather*}
    \caption{Specification of sorted lists.}\label{fig:sorted-lists-spec}
  \end{figure}
In \cref{fig:sorted-lists-instances},
we define two (non-convertible) axioms \(p_{1}\) and \(p_{2}\) as proofs of
\(\zero \leq \suc \bound\),
and two lists containing \(\zero\) but proved to be bounded by \(\suc \bound\)
using \(p_{1}\) for \(\ell_{1}\) and \(p_{2}\) for \(\ell_{2}\).
Type checking \(\ell_{i}\) requires axioms \(p_{i}\).
These axioms are like \tcc's in \pvs.
  \begin{figure}
    \begin{minipage}{0.4\textwidth}
      \begin{gather*}
        \symbol p_{1}: \ecZPrf (\zero \leq \suc \bound)\semi\\
        \symbol p_{2}: \ecZPrf (\zero \leq \suc \bound)\semi
      \end{gather*}
    \end{minipage}
    \begin{minipage}{0.6\textwidth}
      \begin{gather*}
        \symbol \ell_{1} \coloneqq \bcons (\ecZpair \zero~p_{1})
        \bnil \semi\\
        \symbol \ell_{2} \coloneqq \bcons (\ecZpair \zero~p_{2}) \bnil \semi
      \end{gather*}
    \end{minipage}
    \caption{Definition of two sorted lists with different
      proofs.}\label{fig:sorted-lists-instances}
  \end{figure}
Assuming that one wants to prove \(\ell_{1} = \ell_{2}\),
had we lacked proof irrelevance, we would have had to prove that
\(p_{1} \rlZcov p_{2}\), which is not possible.
In our case, the equality is simply the result of
\(\operatorname{refl} \ell_{1}\).
}
\end{example}

\section{Correctness of the Encoding}\label{sec:translation}

In this section, we prove that the encoding is correct: if a \pcert
type is inhabited then its translation is inhabited too. Any
type-checker for \lpme could thus be used to recheck \pcert
typings. However, to make sure that our encoding is faithful
(the encoding that maps any \pcert term to the same well-typed ground term
is correct, but useless),
completeness (also called conservativity)
ought to be proved too: a \pcert type is inhabited
whenever its encoding is inhabited. However, as completeness is often
difficult to establish (see \cite{assaf15phd,thire20phd}), we leave it
for future work.

In the following,
\begin{itemize}
  \item \(s\) stands for \(\teZType\), \(\teZProp\) or \(\teZKind\);
  \item \(T, U\) designate terms of type \(Type\);
  \item \(M, N, t, u\) designate expressions that have a type \(T: Type\);
  \item \(P, Q\) are propositions of type \(\teZProp\), or predicates of type
    \(T \teZarr \teZProp\);
  \item \(h\) stands for a proof typed by a proposition.
\end{itemize}
Typing judgements in \pcert are noted with \(\jgZpvs\), and typing judgements
in \lpme are noted with \(\jgZlpme\).

\begin{lemma}[Preservation of substitution]%
  \label{prop:preservation-substitution}
  If \(\Gamma, x: U, \Delta \jgZpvs M: T\) and \(\Gamma \jgZpvs N: T\),
  then
  \(\trZobj{
    \teZsubs{M}{x}{N}
  }_{\Gamma, \{x \mapsto N\} \Delta} =
  \teZsubs{\trZobj{M}_{\Gamma, x:U, \Delta}}{x}{\trZobj{N}_{\Gamma}}\).
\end{lemma}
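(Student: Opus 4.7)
The plan is to proceed by structural induction on $M$ (equivalently, on the derivation of $\Gamma, x: U, \Delta \jgZpvs M: T$), carried out simultaneously with the analogous statement for the type translation, namely that $\trZtyp{\teZsubs{M}{x}{N}}_{\Gamma, \{x \mapsto N\}\Delta} = \teZsubs{\trZtyp{M}_{\Gamma, x:U, \Delta}}{x}{\trZobj{N}_{\Gamma}}$, since the two translations are mutually referential via $\ecZEl$ and $\ecZPrf$. Before starting the induction I would record two auxiliary facts needed to keep the translation well-defined at each step: (i) substitution preserves typing in \pcert, i.e.\ if $\Gamma, x: U, \Delta \jgZpvs M: T$ and $\Gamma \jgZpvs N: U$ then $\Gamma, \{x \mapsto N\}\Delta \jgZpvs \teZsubs{M}{x}{N} : \teZsubs{T}{x}{N}$ (a standard property of Type Systems Modulo, which in our setting relies on \cref{prop:pcert-eq-confluence} for stability under $\pcZdeq$); and (ii) the translation $\trZobj{N}_\Gamma$ depends only on variables of $\Gamma$ that occur free in $N$, so adding bindings for fresh variables does not change the result.

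For the base cases, both sorts translate to the same $\lpme$ constant on each side. A variable $y \neq x$ translates to itself on both sides. When $M = x$, the left-hand side reduces to $\trZobj{N}_{\Gamma, \{x \mapsto N\}\Delta}$, and (ii) yields $\trZobj{N}_{\Gamma, \{x \mapsto N\}\Delta} = \trZobj{N}_{\Gamma}$, matching the right-hand side. The inductive cases for application, abstraction, pairs, projections and predicate subtypes are routine: in each clause the translation is built as an $\lpme$ constructor applied to the translations of the subterms in the appropriate extended context, substitution commutes with context extension ($\{x \mapsto N\}(\Delta, y: T) = \{x \mapsto N\}\Delta, y: \teZsubs{T}{x}{N}$), and up to $\alpha$-renaming one can assume the bound variable $y$ is distinct from $x$ and fresh for $N$, so that the induction hypothesis (and the simultaneous statement for $\trZtyp{\cdot}_\cdot$ to translate the annotations) closes the case.

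The principal obstacle is the dependent product case, where the translation branches on the sorts of the domain and codomain into three subcases (Type/Type, Type/Prop, Prop/Prop). To conclude, both sides of the equation must pick the same subcase. This is exactly where (i) is essential: the judgements $\Gamma, x: U, \Delta \jgZpvs T : s_1$ and $\Gamma, x: U, \Delta, y: T \jgZpvs U' : s_2$ that justify the chosen clause on the left imply, after substitution, the judgements $\Gamma, \{x \mapsto N\}\Delta \jgZpvs \teZsubs{T}{x}{N} : s_1$ and $\Gamma, \{x \mapsto N\}\Delta, y: \teZsubs{T}{x}{N} \jgZpvs \teZsubs{U'}{x}{N} : s_2$ on the right. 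Sorts are not convertible to each other (a consequence of \cref{prop:pcert-eq-confluence}, as already invoked in the proof of \cref{prop:termination-translation}), hence the classification is unambiguous and the same clause applies. Once the clauses are aligned, the induction hypothesis on $T$ and on $U'$, together with the commutation of substitution and context extension, delivers the required equality term by term.
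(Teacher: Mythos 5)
Your proposal is correct and follows the same route as the paper, whose entire proof is the single line ``By structural induction on \(M\)''; you have simply spelled out the details that the paper leaves implicit. The points you make explicit --- the simultaneous induction with the type translation, the substitution-preserves-typing property needed so that both sides of the product case select the same translation clause, and the freshness/weakening argument for the variable case --- are exactly the right ones to fill in that one-line proof.
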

\begin{proof}
  By structural induction on \(M\).
\end{proof}

\begin{lemma}[Preservation of equivalence]%
  \label{prop:preservation-equivalence}
  Let \(M\) and \(N\) be two well typed terms in \(\Gamma\).
  \begin{enumerate}
    \item\label{it:preservation-single-step}
    If \(M \pcZdeqs N\), then
    \(\trZobj{M}_{\Gamma} \lpZdeq \trZobj{N}_{\Gamma}\).
    \item If \(M \pcZdeq N\), then
      \(\trZobj{M}_{\Gamma} \lpZdeq \trZobj{N}_{\Gamma}\).
  \end{enumerate}
\end{lemma}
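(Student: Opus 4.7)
The plan is to prove part 1 by induction on the position at which the $\pcZdeqs$ step occurs, then derive part 2 by iteration. Since $\pcZdeqs$ is generated by three basic equations---the $\beta$ equation \cref{eq:beta}, the $\teZprolp$ equation \cref{eq:pcert-eq-prol}, and the pair proof-irrelevance equation \cref{eq:pcert-eq-pair-pi}---closed under substitution and context, it suffices to handle the case where the step occurs at the root of the term and then to push the result through each term constructor via structural congruence.

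For a root $\beta$ step $(\teZabst{x}[T]{M})\,N \pcZdeqs \teZsubs{M}{x}{N}$, the translation of the left-hand side is $(\teZabst{x}[\ecZEl\,\trZobj{T}_{\Gamma}]{\trZobj{M}_{\Gamma,x:T}})\,\trZobj{N}_{\Gamma}$, which $\beta$-reduces (in $\lpZdeq$) to $\teZsubs{\trZobj{M}_{\Gamma,x:T}}{x}{\trZobj{N}_{\Gamma}}$; by \cref{prop:preservation-substitution} this is exactly $\trZobj{\teZsubs{M}{x}{N}}_{\Gamma}$. For a root $\teZprolp$ step, the translations of both sides are related by the encoded version of \cref{eq:pcert-eq-prol} (with $\teZpsubp,\teZpairp,\teZprolp$ replaced by $\ecZpsub,\ecZpair,\ecZfst$), which is a generator of $\lpZdeq$; the root pair-pi case is analogous, using the encoded \cref{eq:pcert-eq-pair-pi}.

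The congruence cases proceed by structural induction on the outer constructor. For variables, sorts, applications, abstractions, pairs, projections, and subtype formers, the translation commutes with subterm translations, so the induction hypothesis together with congruence of $\lpZdeq$ suffices. The delicate case is the dependent product $\teZprod{x}{T}{U}$, whose translation dispatches on three subcases determined by the sorts of $T$ and the body $U$. Here I invoke subject reduction---part of \cref{prop:pcert-eq-confluence} for $\beta\teZprolp$ steps, together with the observation that a $\leftrightarrow_{pi}$ step trivially preserves the type of the enclosing subterms since it only modifies the proof component of a pair---to ensure that the same subcase is chosen on both sides of the step. The IH and congruence then conclude.

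The main obstacle is precisely this product dispatch: one must verify that the left and right sides of a congruence step land in the same branch of the product translation. Once this alignment is established via type preservation, the remainder is routine congruence bookkeeping. Part 2 is then immediate: $\pcZdeq$ is the reflexive-symmetric-transitive closure of $\pcZdeqs$, and since $\lpZdeq$ is itself an equivalence, iterating part 1 along the chain of $\pcZdeqs$ steps witnessing $M\pcZdeq N$ yields $\trZobj{M}_{\Gamma}\lpZdeq\trZobj{N}_{\Gamma}$.
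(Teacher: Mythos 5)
Your treatment of part~1 is essentially sound and close to the paper's: the root cases are handled with \cref{prop:preservation-substitution} for $\beta$ and with the encoded versions of \cref{eq:pcert-eq-prol,eq:pcert-eq-pair-pi} for the projection and proof-irrelevance steps, and the remaining work is congruence bookkeeping by induction on the context surrounding the redex (your care about the sort dispatch in the product clause is legitimate but unproblematic, since both endpoints of the step are assumed typable in $\Gamma$).

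Part~2, however, has a genuine gap. You claim it is ``immediate'' by iterating part~1 along an arbitrary chain of $\pcZdeqs$ steps witnessing $M \pcZdeq N$. But part~1 only applies when \emph{both} endpoints of a step are well typed in $\Gamma$, and the intermediate terms of an arbitrary conversion chain need not be typable at all; on such terms the translation $\trZobj{\cdot}_{\Gamma}$ is simply undefined (its well-definedness, \cref{prop:termination-translation}, and even its clauses---e.g.\ the product cases, which dispatch on $\Gamma \jgZpvs T:\teZType$ versus $\Gamma \jgZpvs T:\teZProp$---presuppose typability). This is precisely the difficulty that \cref{prop:pcert-eq-confluence} was set up to resolve: its third item shows that a conversion between typable terms factors as
$M \left(\rlZrw_{\beta\teZprolp}^{ty}\right)^{*} H_{0} \left(\leftrightarrow_{pi}^{ty}\right)^{*} H_{1} \left(\rlZwr_{\beta\teZprolp}^{ty}\right)^{*} N$,
i.e.\ through a chain all of whose intermediate terms are typable, and this in turn rests on confluence of $\rlZrw_{\beta\teZprolp}$ modulo $=_{pi}$, postponement of $\leftrightarrow_{pi}$ steps, and subject reduction. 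The paper's proof of part~2 first invokes this decomposition and only then iterates part~1 on the typed steps. Without this (or an equivalent argument that untyped detours can be eliminated), your iteration of part~1 does not go through.
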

\begin{proof}
  Each item is proved separately.
  \begin{enumerate}
    \item Taking back the notations of the proof of
    \cref{prop:pcert-eq-confluence}, we show that
      \begin{enumerate}
        \item computational steps of \(\rlZrw_{\beta \teZprolp}^{ty}\)
          are preserved,
        \item equational steps of \(\leftrightarrow_{pi}^{ty}\) are preserved.
      \end{enumerate}
      These two properties are shown by induction on a context \(C\)
      such that \(M = C[\hat{M}] ~ R ~ C[\hat{N}] = N\)
      where \(R\) is any of the two relations applied
      at the head of \(\hat{M}\) and \(\hat{N}\). We will only detail the
      base cases of inductions, the other cases being straightforward.

      \begin{description}
        \item[Preservation of Computation]
          There are two possible cases,
      \begin{description}
        \item[Case \(M = ((\teZabst{x}{t})~u) \rlZrw_{\beta}
          \teZsubs{t}{x}{u}\),]
          we have,
          \[
          \trZobj{(\teZabst{x}[U]{t})~u}_{\Gamma} =
          ((\teZabst{x}[\trZtyp{U}_{\Gamma}]{\trZobj{t}_{\Gamma, x:U}})\,
          \trZobj{u}_{\Gamma}) =
          \teZsubs{\trZobj{t}_{\Gamma}}{x}{\trZobj{u}_{\Gamma}} \lpZdeq
          \trZobj{\teZsubs{t}{x}{u}}_{\Gamma}
          \]
          where the equivalence is given by
          \cref{prop:preservation-substitution}.

        \item[Case \(M = \teZprol{T_{1}}{P_{1}}{(\teZpair{T_{0}}{P_{0}}{t}{h})}
          \rlZrw_{\teZprolp} t\),]
          we have the following equalities
          \begin{equation*}
            \begin{aligned}
              \trZobj{\teZprol{T_{1}}{P_{1}}{(\teZpair{T_{0}}{P_{0}}{t}{h})}}_{\Gamma}
              &= \ecZfst \trZobj{T_{1}}_{\Gamma}\, \trZobj{P_{1}}_{\Gamma}\,
              \trZobj{\teZpair{T_{0}}{P_{0}}{t}{h}}_{\Gamma}\\
              &
              = \ecZfst \trZobj{T_{1}}_{\Gamma}\, \trZobj{P_{1}}_{\Gamma}
              \left(\ecZpair \trZobj{T_{0}}_{\Gamma}\, \trZobj{P_{0}}_{\Gamma}\,
                \trZobj{t}_{\Gamma}\, \trZobj{h}_{\Gamma}
              \right)
              \\
              &\lpZdeq \trZobj{t}_{\Gamma}
            \end{aligned}
          \end{equation*}
          with the last equivalence provided by \cref{eq:pcert-eq-prol}.
      \end{description}

      \item[Preservation of Proof Irrelevance]
        Assume that
        \(M = \teZpair{T}{P}{t}{h} \leftrightarrow_{pi} \teZpair{T}{P}{t}{h'}\)
        \[
            \trZobj{\teZpair{T}{P}{t}{h}}_{\Gamma} =
            \ecZpair \trZobj{T}_{\Gamma}\, \trZobj{P}_{\Gamma}\,
            \trZobj{t}_{\Gamma}\, \trZobj{h}_{\Gamma}
            \lpZdeq
            \ecZpair \trZobj{T}_{\Gamma}\, \trZobj{P}_{\Gamma}\,
            \trZobj{t}_{\Gamma}\, \trZobj{h'}_{\Gamma} =
            \trZobj{\teZpair{T}{P}{t}{h'}}_{\Gamma}
        \]
        where the equivalence is given by \cref{eq:pcert-eq-pair-pi}.
    \end{description}

    \item
    By \cref{prop:pcert-eq-confluence}, we know that there are \(H_{0}\)
    and \(H_{1}\) such that
    \(M (\rlZrw_{\beta\teZprolp}^{ty})^{*}
    H_{0} (\leftrightarrow_{pi}^{ty})^{*} H_{1}\)
    \({(\rlZwr_{\beta\teZprolp}^{ty})^{*}} N\).
    For \(R \in \{\leftrightarrow_{pi}, \rlZrw_{\beta\teZprolp}\}\),
    we have \(t (R^{ty})^{*} u \Rightarrow \trZobj{t} \lpZdeq \trZobj{u}\)
    by induction on the number of \(R^{ty}\) steps,
    using \cref{it:preservation-single-step} for the base case.
    Therefore,
    \(\trZobj{M}_{\Gamma} \lpZdeq \trZobj{H_{0}}_{\Gamma} \lpZdeq
    \trZobj{H_{1}}_{\Gamma} \lpZdeq \trZobj{N}_{\Gamma}\),
    which gives, by transitivity of \(\lpZdeq\), \(\trZobj{M}_{\Gamma} \lpZdeq \trZobj{N}_{\Gamma}\).
  \end{enumerate}
\end{proof}

\begin{theorem}[Correctness]
  If \(\Gamma \jgZpvs M: T\), then
  \(\trZctx{\Gamma} \jgZlpme \trZobj{M}_{\Gamma}: \trZtyp{T}_{\Gamma}\).
  For all \(\Gamma\), if \(\Gamma \jgZpvs WF\),
  then \(\trZctx{\Gamma} \jgZlpme WF\).
\end{theorem}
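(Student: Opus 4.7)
The plan is to prove both statements by simultaneous induction on the derivations of $\Gamma \jgZpvs M : T$ and $\Gamma \jgZpvs WF$, going case by case through the typing rules of \cref{fig:pts-typing-rules} instantiated for \pcert. The well-formedness cases are routine: \emph{empty} maps to \emph{empty} in \lpme, and for \emph{decl}, the induction hypothesis gives $\trZctx{\Gamma} \jgZlpme \trZobj{T}_\Gamma : \trZtyp{s}$, from which a short case analysis on $s \in \{\teZType, \teZProp\}$ yields $\trZctx{\Gamma} \jgZlpme \trZtyp{T}_\Gamma : \lpZTYPE$ using \cref{sym:El,sym:Prf}, so that $\trZctx{\Gamma}, x : \trZtyp{T}_\Gamma$ is well-formed. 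The \emph{var} rule is immediate by definition of $\trZctx{\Gamma}$, and the \emph{sort} rule reduces to the two typings $\jgZlpme \ecZtype : \ecZKind$ and $\jgZlpme \ecZprop : \ecZType$ from \cref{sym:set,sym:bool}, combined with $\trZtyp{\teZType} = \ecZType$ and $\trZtyp{\teZProp} = \ecZEl \ecZprop \lpZdeq \ecZProp$ via \cref{eq:eq-lhol-bool}, which is absorbed through the \emph{conv} rule.

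The core of the proof is the \emph{prod} and \emph{abst} rules, which split into three subcases mirroring $\mathcal{P}^{\lhol}$. For $(s_1,s_2,s_3) = (\teZType,\teZType,\teZType)$, the translation uses $\ecZarrd$ and one checks $\trZobj{T}_\Gamma \ecZarrd (\teZabst{x}[\ecZEl \trZobj{T}_\Gamma]{\trZobj{U}_{\Gamma,x:T}}) : \ecZType$ from \cref{sym:lhol-arrd}. The subcase $(\teZType,\teZProp,\teZProp)$ uses \cref{sym:lhol-forall} and $\ecZfa$, while $(\teZProp,\teZProp,\teZProp)$ uses \cref{sym:lhol-impd} and $\ecZimpd$. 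The matching \emph{abst} case in each subcase types an abstraction whose codomain is $\ecZEl(\trZobj{T} \ecZarrd \dots)$ or $\ecZPrf(\ecZfa \dots)$ or $\ecZPrf(\ecZimpd \dots)$; rewriting via \cref{eq:eq-lhol-arrd,eq:eq-lhol-forall,eq:eq-lhol-impd} turns these into genuine \lpme products $\teZprod{x}{\dots}{\dots}$, which is exactly what the \emph{abst} rule of \lpme requires, modulo a \emph{conv} step.

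The \emph{app} rule uses the induction hypothesis on $M$ (whose translation has type $\ecZEl(\trZobj{T} \ecZarrd \dots)$ or $\ecZPrf(\ecZfa \dots)$ etc., which after \cref{eq:eq-lhol-arrd,eq:eq-lhol-forall,eq:eq-lhol-impd} becomes a proper product type) and then invokes \cref{prop:preservation-substitution} to match the resulting substituted type with $\trZtyp{\teZsubs{U}{x}{N}}$. The \emph{sig} rule for each of the four symbols in $\Sigma^{\text{\pvs}}$ is discharged by direct reference to the corresponding \lpme declaration in $\Sigma^{\ecZpsub}$ (\cref{eq:encoding-psub,eq:encoding-pair,eq:encoding-proj-l,eq:encoding-proj-r}), since their signatures were designed to match after translation. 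The \emph{conv} rule is where \cref{prop:preservation-equivalence} enters crucially: from $T \pcZdeq U$ on typable terms we obtain $\trZtyp{T}_\Gamma \lpZdeq \trZtyp{U}_\Gamma$, which feeds the \lpme \emph{conv} rule.

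The main obstacle I expect is bookkeeping the many applications of the \emph{conv} rule required throughout: every time the translation produces, say, $\ecZPrf(\ecZfa t\, p)$ where a product $\teZprodp$ is needed (or vice versa), one must insert a \emph{conv} step justified by \cref{fig:lhol-in-lpme-eqs}. A subtler point is that the statement is formulated with $\trZtyp{T}_\Gamma$ depending on the sort of $T$, so whenever \emph{conv} is applied in \pcert between two types of different-looking shapes, one must check that both land in the same $\ecZEl$ vs $\ecZPrf$ branch; this is ensured by the fact that $\pcZdeq$ preserves the classifying sort (a consequence of preservation of typing in \cref{prop:pcert-eq-confluence}). Once these compatibility checks are threaded through, the induction closes uniformly.
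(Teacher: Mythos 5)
Your proposal follows essentially the same route as the paper's proof: induction on the typing derivation with a case analysis on the last rule, using \cref{prop:preservation-substitution} for the \emph{app} and \emph{sig} cases, \cref{prop:preservation-equivalence} for \emph{conv}, sort-based case distinctions to switch between \(\ecZEl\)/\(\ecZPrf\) and to justify the inserted conversion steps via \cref{fig:lhol-in-lpme-eqs}, and the typing-preservation part of \cref{prop:pcert-eq-confluence} to ensure \(T\) and \(U\) share a sort in the \emph{conv} case. The plan is correct and matches the paper's argument in structure and in its key lemmas.
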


\begin{proof}
  By induction on the typing derivation of \(\Gamma \jgZpvs M: T\)
  and case distinction on the last inference rule.
  \begin{description}
    \item[empty]
      \(\prftree{\emptyset \jgZpvs WF}\)

      We have \(\trZctx{\emptyset} = \emptyset\) and
      \(\emptyset \jgZlpme WF\).

    \item[decl]
      \(\prftree[r]{$v \not\in \Gamma$}{\Gamma \jgZpvs T: s}{\Gamma, v: T \jgZpvs WF}\)

      We have \(\trZctx{\Gamma, v: T} = \trZctx{\Gamma}, v: \trZtyp{T}_{\Gamma}\).
      By induction hypothesis,
      we have \(\trZctx{\Gamma} \jgZlpme
      \trZobj{T}_{\Gamma}: \trZtyp{s}_{\Gamma}\),
      for \(s \in \mathcal{S}\) and hence \(\trZtyp{s}_{\Gamma}\) is either
      \(\ecZProp\) by conversion (because \(\ecZEl \ecZprop \lpZdeq \ecZProp\)),
      \(\ecZType\) or \(\ecZKind\).
      If \(s\) is \(\teZKind\), then \(T\) is \(\teZType\).
      Since \(\trZctx{\Gamma} \jgZlpme \ecZType: \lpZTYPE\) because
      \(\sigPcertEnc(\ecZType) = (\vec{0}, (\lpZTYPE, \lpZKIND))\),
      we can derive with the declaration rule
      \(\trZctx{\Gamma, v: T} \jgZlpme WF\) because
      \(\trZtyp{\teZType} = \ecZType\).
      Otherwise, \(s\) is \(\teZType\) or \(\teZProp\) and
      \(\trZtyp{T} = \xi\, \trZobj{T}_{\Gamma}\)
      where \(\xi\) is \(\ecZEl\) or \(\ecZPrf\).
      By typing of \(\ecZEl\) or \(\ecZPrf\) (with the signature),
      \(\trZctx{\Gamma} \jgZlpme \trZtyp{T}_{\Gamma}: \lpZTYPE\)
      and finally,
      \(\trZctx{\Gamma, v: T} \jgZlpme WF\)
      by application of the declaration rule.

    \item[var]
      \(\prftree[r]{$v: T \in \Gamma$}{\Gamma \jgZpvs WF}{\Gamma \jgZpvs v: T}\)

      By definition, \(\trZobj{v} = v\) and by induction hypothesis,
      \(\trZctx{\Gamma} \jgZlpme WF\). Since \(v: T \in \Gamma\),
      by definition, there is \(\Delta \subsetneq \Gamma\),
      \(\Delta \jgZpvs WF\) such that,
      \(v: \trZtyp{T}_{\Delta} \in \trZctx{\Gamma}\).
      Hence \(\trZctx{\Gamma} \jgZlpme v: \trZtyp{T}_{\Delta}\)
      and finally \(\trZctx{\Gamma} \jgZlpme v: \trZtyp{T}_{\Gamma}\)
      because contexts are well formed.

    \item[sort]
      \(\prftree[r]{$(s_{1}, s_{2}) \in \mathcal{A}$}{\Gamma \jgZpvs WF}{\Gamma \jgZpvs s_{1}: s_{2}}\)

      First, \(\trZobj{s_{1}}\) is either \(\ecZprop\) or \(\ecZtype\).
      In the former case, \(\trZtyp{s_{2}} = \ecZType\)
      and because \(\trZctx{\Gamma} \jgZlpme WF\)
      (by induction hypothesis) and
      \(\sigPcertEnc(\ecZprop) = (\vec{0}, (\ecZType, \lpZTYPE))\),
      we have \(\trZctx{\Gamma} \jgZlpme \ecZprop: \ecZType\).
      The same procedure holds for
      \(s_{1} = \teZType\) and \(s_{2} = \teZKind\).

    \item[prod]
      \(\prftree[r]{$(s_{1}, s_{2}, s_{3}) \in \mathcal{P}$}{%
      \Gamma \jgZpvs T: s_{1}}{\Gamma, x: T \jgZpvs U: s_{2}}{%
      \Gamma \jgZpvs \teZprod{x}{T}{U}: s_{3}}\)

      We only detail for the product \((\teZType, \teZProp, \teZProp)\),
      others being processed similarly.
      We have \(\trZobj{\teZprod{x}{T}{U}}_{\Gamma} =
      \ecZfa \trZobj{T}_{\Gamma}\,\left(\teZabst{x}[\trZtyp{T}_{\Gamma}]{%
      \trZobj{U}_{\Gamma, x: T}}\right)\).
      By induction hypothesis,
      \(\trZctx{\Gamma} \jgZlpme \trZobj{T}: \trZtyp{\teZType}\),
      and thus
      \(\trZctx{\Gamma} \jgZlpme \trZobj{T}: \ecZType\) by definition.
      By induction hypothesis,
      \(\trZctx{\Gamma, x: T} \jgZlpme \trZobj{U}: \trZtyp{\teZProp}\),
      and thus \(\trZctx{\Gamma}, x: \trZtyp{T}_{\Gamma}
      \jgZlpme \trZobj{U}: \ecZProp\)
      by definition of \(\trZctx{\cdot}\) and conversion which yields
      \(\trZctx{\Gamma} \jgZlpme
      \teZabst{x}[\trZtyp{T}_{\Gamma}]{\trZobj{U}_{\Gamma, x: T}}:
      \trZtyp{T}_{\Gamma} \teZarr \ecZProp\).

      To finish, we obtain
      \(\trZctx{\Gamma} \jgZlpme \teZabst{x}[\trZtyp{T}_{\Gamma}]{\trZobj{U}_{\Gamma, x: T}}:
      (\ecZEl \trZobj{T}_{\Gamma}) \teZarr \ecZProp\) by conversion.
      Using the typing signature \(\sigPcertEnc\),
      \(\trZctx{\Gamma} \jgZlpme \ecZfa \trZobj{T}_{\Gamma}\,
      \left(\teZabst{x}{\trZtyp{T}_{\Gamma}}{\trZobj{U}_{\Gamma, x: T}}\right): \ecZProp\)
      which becomes, by conversion  \(\ecZProp \lpZdeq \ecZEl \ecZprop\)
      and definition of \(\trZtyp{\cdot}_{\Gamma}\): \(\ecZEl \ecZprop =
      \trZtyp{\teZProp}\), hence,
      \(\trZctx{\Gamma} \jgZlpme \ecZfa \trZobj{T}_{\Gamma}\,
      \left(\teZabst{x}{\trZtyp{T}_{\Gamma}}{\trZobj{U}_{\Gamma, x: T}}\right): \trZtyp{\teZProp}\)

    \item[abst]
      \(\prftree{\Gamma, v: T \jgZpvs M: U}{\Gamma \jgZpvs \teZprod{v}{T}{U}: s}{
      \Gamma \jgZpvs \teZabst{v}[T]{M}: \teZprod{v}{T}{U}}\)

      We have \(\trZobj{\teZabst{v}[T]{M}}_{\Gamma} =
      \teZabst{v}[\trZtyp{T}_{\Gamma}]{\trZobj{M}_{\Gamma}}\).
      By induction hypothesis,
      \(\trZctx{\Gamma, v: T} \jgZlpme \trZobj{M}_{\Gamma, v: T}:
      \trZtyp{U}_{\Gamma, v: T}\) and by definition of \(\trZctx{\cdot}\),
      \(\trZctx{\Gamma}, v: \trZtyp{T}_{\Gamma} \jgZlpme
      \trZobj{M}_{\Gamma, v: T}: \trZtyp{U}_{\Gamma, v: T}\).
      Applying the abstraction rule in \lpme, we obtain
      \(\trZctx{\Gamma} \jgZlpme
      \teZabst{v}[\trZtyp{T}_{\Gamma}]{\trZobj{M}_{\Gamma, v: T}}:
      \teZprod{v}{\trZtyp{T}_{\Gamma}}{\trZtyp{U}_{\Gamma, v: T}}\)
      (with the product well typed in \lpme since
      \(\trZtyp{U}\) and \(\trZtyp{T}\) are both of type \(\lpZTYPE\)
      and thus the product is of type \(\lpZTYPE\) as well).

      Finally, we proceed by case distinction
      on sorts \(s_{T}\) and \(s_{U}\) such that
      \(\Gamma \jgZpvs T: s_{T}\) and \(\Gamma \jgZpvs U: s_{U}\).
      We will detail the case \((s_{T}, s_{U}) = (\teZType, \teZProp)\).
      We have \(\teZprod{v}{\trZtyp{T}_{\Gamma}}{\trZtyp{U}_{\Gamma, v: T}}
      \lpZdeq \ecZPrf (\ecZfa \trZobj{T}_{\Gamma}\,
      (\teZabst{x}[\trZtyp{T}_{\Gamma}]{\trZobj{U}_{\Gamma, v: T}}))
      = \trZtyp{\teZprod{v}{T}{U}}_{\Gamma} \) which allows to conclude.
    \item[app]
      \(\prftree{\Gamma \jgZpvs M: \teZprod{v}{T}{U}}{\Gamma \jgZpvs N: T}{
        \Gamma \jgZpvs M\, N: \teZsubs{U}{v}{N}
      }\)

      By induction hypothesis and conversion, we have
      \(\trZctx{\Gamma} \jgZlpme \trZobj{M}_{\Gamma}:
      \teZprod{v}{\trZtyp{T}_{\Gamma}}{\trZtyp{U}_{\Gamma, v: T}}\)
      (shown by case distinction on the sorts of \(T\) and \(U\))
      and
      \(\trZctx{\Gamma} \jgZlpme \trZobj{N}_{\Gamma}: \trZtyp{T}_{\Gamma}\).
      Since \(\trZobj{M\, N}_{\Gamma} = \trZobj{M}\, \trZobj{N}\),
      we obtain using the application rule
      \(\trZctx{\Gamma} \jgZlpme \trZobj{M\, N}:
      \teZsubs{\trZtyp{U}_{\Gamma, v: T}}{v}{\trZobj{N}_{\Gamma}}\)
      and by \cref{prop:preservation-substitution}, we obtain
      \(\trZctx{\Gamma} \jgZlpme \trZobj{M\, N}:
      \trZtyp{\teZsubs{U}{v}{N}}_{\Gamma}\).

    \item[conv]
      \(\prftree{\Gamma \jgZpvs M: U}{\Gamma \jgZpvs T: s}{T \pcZdeq U}{%
      \Gamma \jgZpvs M: T}\)

      By hypothesis, there is a type \(U\) such that \(\Gamma \jgZpvs M: U\),
      and \(T \pcZdeq U\), and there is a sort \(s\) such that
      \(\Gamma \jgZpvs T: s\).
      By induction hypothesis,
      \(\trZctx{\Gamma} \jgZlpme \trZobj{M}_{\Gamma}:
      \trZtyp{U}_{\Gamma}\).

      We now prove that if \(T \pcZdeq U\), then \(\trZtyp{T}_{\Gamma} \lpZdeq
      \trZtyp{U}_{\Gamma}\) and \(\Gamma \jgZlpme \trZtyp{T}: \lpZTYPE\):
      it will allow us to conclude using the conversion
      rule in \lpme.

      By \cref{prop:pcert-eq-confluence}, we have
      \(T \rlZrw_{\beta\teZprolp}^{*} T' =_{pi} U' \rlZwr_{\beta\teZprolp}^{*}
      U\) and
      \(T (\rlZrw^{ty}_{\beta\teZprolp})^{*} T' (\leftrightarrow_{pi}^{ty})^{*}
      U' (\rlZwr^{ty}_{\beta\teZprolp})^{*} U\).
      Because \(\rlZrw_{\beta\teZprolp}\) preserves typing
      (\cref{prop:pcert-eq-confluence}),
      we have \(\Gamma \jgZpvs U': s\).
      By~\cite[Lemma 43]{blanqui01phd}, \(\Gamma \jgZpvs T: s\).
      By \cref{prop:preservation-equivalence},
      \(\trZobj{T}_{\Gamma} \lpZdeq \trZobj{U}_{\Gamma}\)

      If \(s = \teZProp\), then
      \(\trZtyp{T}_{\Gamma} = \ecZPrf \trZobj{T}_{\Gamma} \lpZdeq
      \ecZPrf \trZobj{U}_{\Gamma} = \trZtyp{U}_{\Gamma}\).
      Moreover we have \(\trZctx{\Gamma} \jgZlpme \trZtyp{T}_{\Gamma}:
      \lpZTYPE\) because, by induction hypothesis,
      \(\trZobj{T}_{\Gamma} : \trZtyp{\teZProp} = \ecZEl \trZobj{\teZProp} =
      \ecZEl \ecZprop = \ecZProp\), and
      \((p: \ecZProp \vdash_{\sigPcertEnc} \ecZPrf p: \lpZTYPE: \lpZKIND)\).
      If \(s = \teZType\),
      \(\trZtyp{T}_{\Gamma} = \ecZEl \trZobj{T}_{\Gamma} \lpZdeq
      \ecZEl \trZobj{U}_{\Gamma} = \trZtyp{U}_{\Gamma}\).
      By induction hypothesis, \(\trZobj{T}_{\Gamma}: \trZtyp{\teZType}_{\Gamma}
      = \ecZType\).
      If \(s = \teZKind\), then \(T = U = \teZType\)
      (\(\teZType\) is the only inhabitant of \(\teZKind\)).
      Finally, \(\trZtyp{\teZType} = \ecZType: \lpZTYPE\).

    \item[sig]
      \(\prftree[r]{$\Sigma(f) = (\overrightarrow{x, T}, U, s)$}{%
      \overrightarrow{x: T} \vdash U: s}{%
      \left(\Gamma \vdash t_i: \teZsubsp{T_i}{\left(x_j \mapsto t_j\right)_{j < i}}\right)_i
      }{\Gamma \vdash f(\vec{t}): \teZsubsp{U}{\overrightarrow{x \mapsto t}}}\)

      We first observe from \cref{fig:pcert-encoding}
      that for each \(f \in \Sigma^{\text{\pvs}}\), we have a counterpart
      symbol \(\hat{f} \in \sigPcertEnc\)
      such that if \(\Sigma^{\text{\pvs}}(f) =
      \left(\overrightarrow{x: T}, U, s\right)\),
      then
      \(\sigPcertEnc(\hat{f}) =
      \left(\overrightarrow{x, \trZtyp{T}},
      \trZtyp{U}_{\overrightarrow{x: T}}, \lpZTYPE\right)\).

      By induction hypothesis, for each \(i\), we have
      \(\trZctx{\Gamma} \jgZlpme \trZobj{t_{i}}_{\Gamma}:
      \trZtyp{\teZsubsp{T_{i}}{(x_{j} \mapsto t_{j})_{j < i}}}_{\Gamma}\)
      which we can write as, thanks to \cref{prop:preservation-substitution},
      \(\trZctx{\Gamma} \jgZlpme \trZobj{t_{i}}_{\Gamma}:
      \teZsubsp{\trZtyp{T_{i}}_{\Gamma}}{(x_{j} \mapsto
      \trZobj{t_{j}}_{\Gamma})_{j < i}}\).

      Now, using the signature rule, we are able to conclude
      \(\trZctx{\Gamma} \jgZlpme \hat{f}\,\overrightarrow{\trZobj{t}_{\Gamma}}:
      \teZsubsp{\trZtyp{U}}{\overrightarrow{x \mapsto \trZobj{t}}}\).
      By \cref{prop:preservation-substitution}, we obtain
      \(\trZctx{\Gamma} \jgZlpme
      \hat{f}\,\overrightarrow{\trZobj{t}_{\Gamma}}:
      \trZtyp{\teZsubsp{U}{\overrightarrow{x \mapsto t}}}\).
      Moreover, we have taken care to define the translation
      in \cref{fig:pcert-to-lpme} such that
      \(\trZobj{f(\overrightarrow{t})} = \hat{f}\,\overrightarrow{\trZobj{t}}\).
      \qedhere
  \end{description}
\end{proof}


\section{Mechanised Type Checking}

The encoding of \pcert{} into \lpme{} can be used to proof check terms of
\pcert{} using a type checker for \lpme{}. But because of the rule
\begin{equation}
  \label{eq:lpme-conv-rule}
  \prftree{\Gamma \vdash t: B}{\Gamma \vdash A: s}{A \rlZdeq B}
  {\Gamma \vdash t: A}\tag{\lpme{}-conv}
\end{equation}
type checking is decidable only if \(\rlZdeq\) is.
A decidable relation equivalent to \(\rlZdeq\) can be obtained using
the convertibility relation stemming from the rewriting relation
of a convergent rewrite system,
yielding the type system \lpmr (\(R\) for \emph{rewriting}).
Consequently, while type checkers cannot be provided for \lpme in general,
they can for \lpmr, as can be seen with
\Dedukti\footnote{\url{https://github.com/Deducteam/lambdapi.git}}.
Such rewrite systems can be obtained through
\emph{completion procedures}~\cite{baader90trat}.
However, completion procedures rely on a well-founded order
that cannot be provided here because of \cref{eq:pcert-eq-pair-pi}
which cannot be oriented since each side of the equation has a free variable
which is not in the other side.

A possible solution would be to rewrite all proofs of a pair to a
canonical proof with a rule of the form
{
  \newcommand\canon{\operatorname{\enc{canon}}}
\begin{equation*}
  \ecZpair t\, p\, m\, h \rlZred \ecZpair t\, p\, m\, (\canon t\, p\, m)
\end{equation*}
where
\(t: \ecZType, p: \ecZEl t \teZarr \ecZProp, m: \ecZEl t \vdash \canon t\,p\,m:
\ecZPrf (p\,m): \lpZTYPE\).
But this creates a rewrite rule that duplicates three variables.
}

Otherwise, as noted in~\cite{Knuth1983SimpleWP}, the addition of a symbol to the
signature can circumvent the issue.
Hence, we add a symbol for proof irrelevant pairs, and make it equal to pairs
\begin{gather}
  \label{sym:pairp}
  t: \ecZType, p: \ecZEl t \teZarr \ecZProp, m: \ecZEl t
  \vdash \ecZpairp t\, p\, m: \ecZEl (\ecZpsub t\, p): \lpZTYPE
  \\
  \label{eq:eq-pair-pairp}
  \ecZpair t\, p\, m\, h = \ecZpairp t\, p\, m
\end{gather}
thus
\((\ecZpair\ t\ p\ m\ h) \rlZdeq (\ecZpairp\ t\ p\ m) \rlZdeq
(\ecZpair\ t\ p\ m\ h')\).
The new set of identities given by
\cref{eq:eq-lhol-bool,eq:eq-lhol-forall,eq:eq-lhol-impd,eq:eq-lhol-arrd,%
eq:pcert-eq-prol,eq:eq-pair-pairp}
can be completed into a rewrite system \(R\) which is equivalent
to the equations:

\begin{figure}[ht]
  \begin{minipage}{0.5\textwidth}
    \begin{gather}
      \label{eq:rw-beta}
      (\teZabst{x}[T]{t})\, u \rlZrw \teZsubs{t}{x}{u}\\
      \label{eq:rw-pair-pairp}
      \ecZpair t\, p\, m\, h \rlZred \ecZpairp t\, p\, m\\
      \label{eq:rw-fst}
      \ecZfst t_{0}\, p_{0}\, (\ecZpairp t_{1}\, p_{1}\, m) \rlZred m
    \end{gather}
  \end{minipage}
  \begin{minipage}{0.5\textwidth}
    \begin{gather}
      \label{eq:rw-lhol-bool}
      \ecZEl \ecZprop \rlZred \ecZProp\\
      \label{eq:rw-lhol-forall}
      \ecZPrf (\ecZfa t ~ p) \rlZred \teZprod{x}{\ecZEl t}{\ecZPrf(p~x)}\\
      \label{eq:rw-lhol-arrd}
      \ecZEl (t \ecZarrd u) \rlZred \teZprod{x}{\ecZEl t}{\ecZEl (u ~ x)}\\
      \label{eq:rw-lhol-impd}
      \ecZPrf (p \ecZimpd q) \rlZred \teZprod{h}{\ecZPrf p}{(\ecZPrf (q ~ h))}
    \end{gather}
  \end{minipage}
  \caption{Rewrite system \(R\) resulting from the completion
    of the equations of the encoding of \pcert in
    \lpme.}\label{fig:lpmr-rewrite-system}
\end{figure}

\begin{proposition}\label{prop:equivalence-lpme-lpmr}
  Let \(\rlZrw_{R}\) be the closure by context and substitution of the rewrite
  rules of \cref{fig:lpmr-rewrite-system}, and
  \(\rlZcov_{R}\) be the smallest equivalence containing
  \(\rlZrw_{R}\).
  Then, for all \(M, N \in {\mathcal{T}(\sigPcertEnc,
    \mathcal{S}^{\mathrm{\lambda\Pi}}, \mathcal{V})}\), if
  \(M \lpZdeq N\) then \(M \rlZcov_{R} N\).
\end{proposition}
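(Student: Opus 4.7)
The plan is to exploit the fact that \(\lpZdeq\) is by definition the smallest congruence on \(\mathcal{T}(\sigPcertEnc, \mathcal{S}^{\mathrm{\lambda\Pi}}, \mathcal{V})\) containing the equations \cref{eq:pcert-eq-pair-pi,eq:pcert-eq-prol,eq:beta,eq:eq-lhol-bool,eq:eq-lhol-forall,eq:eq-lhol-impd,eq:eq-lhol-arrd}. It therefore suffices to show that \(\rlZcov_{R}\) is itself a congruence on the same set of terms that contains each of these generating equations; the inclusion \({\lpZdeq}\subseteq{\rlZcov_{R}}\) then follows by minimality of \(\lpZdeq\).

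First I would observe that \(\rlZcov_{R}\) is a congruence: it is an equivalence relation by construction (being the reflexive, symmetric and transitive closure of \(\rlZrw_{R}\)), and \(\rlZrw_{R}\) is by definition closed under substitution and context, a property that lifts to \(\rlZcov_{R}\). Hence it only remains to check each generating equation of \(\lpZdeq\) individually.

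I would then proceed case by case on the equations. Equations \cref{eq:beta,eq:eq-lhol-bool,eq:eq-lhol-forall,eq:eq-lhol-impd,eq:eq-lhol-arrd} are, up to orientation, exactly the rewrite rules \cref{eq:rw-beta,eq:rw-lhol-bool,eq:rw-lhol-forall,eq:rw-lhol-impd,eq:rw-lhol-arrd}, so each is directly included in \(\rlZcov_{R}\). For \cref{eq:pcert-eq-prol}, starting from \(\ecZfst t_{0}\, p_{0}\, (\ecZpair t_{1}\, p_{1}\, m\, h)\), one application of \cref{eq:rw-pair-pairp} under the \(\ecZfst\) context gives \(\ecZfst t_{0}\, p_{0}\, (\ecZpairp t_{1}\, p_{1}\, m)\), and then \cref{eq:rw-fst} reduces to \(m\). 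The essential case is \cref{eq:pcert-eq-pair-pi}: both \(\ecZpair t\, p\, m\, h_{0}\) and \(\ecZpair t\, p\, m\, h_{1}\) rewrite in one step by \cref{eq:rw-pair-pairp} to the same term \(\ecZpairp t\, p\, m\), from which symmetry and transitivity of \(\rlZcov_{R}\) yield \(\ecZpair t\, p\, m\, h_{0}\rlZcov_{R}\ecZpair t\, p\, m\, h_{1}\). This is the step that justifies the introduction of the auxiliary symbol \(\ecZpairp\): proof irrelevance, which is intrinsically non-orientable as a rewrite rule because each side carries a free variable absent from the other, is simulated by factoring both sides through the proof-erased constructor.

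The main obstacle is conceptual rather than technical: convincing oneself that, since \(\ecZpairp\) and its defining rule \cref{eq:rw-pair-pairp} do not appear on the left-hand side of \(\lpZdeq\), one need not worry about \(R\)-convertibility relating strictly more terms than the intended congruence for the purposes of this proposition (the converse direction, i.e.\ conservativity of the encoding with respect to the new symbol, is not needed here and is acknowledged as future work in \cref{sec:translation}). Once this is clear, the case analysis above completes the argument.
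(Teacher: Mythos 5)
Your proposal is correct and follows essentially the same route as the paper: checking that each generating equation of \(\lpZdeq\) is contained in \(\rlZcov_{R}\), with the proof-irrelevance equation handled by rewriting both pairs to the common term \(\ecZpairp t\, p\, m\) and \cref{eq:pcert-eq-prol} handled by the two-step reduction through \(\ecZpairp\). Your additional remarks on \(\rlZcov_{R}\) being a congruence and on why the converse inclusion is not needed are fine but only make explicit what the paper leaves implicit.
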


\begin{proof}
  It suffices to prove that every equation of \pcert is included
  in $\rlZcov_{R}$. This is immediate for the Equations
  \labelcref{eq:beta,eq:eq-lhol-bool,eq:eq-lhol-forall,eq:eq-lhol-impd,eq:eq-lhol-arrd}
  since they are equal to the rules
  \labelcref{eq:rw-beta,eq:rw-lhol-bool,eq:rw-lhol-forall,eq:rw-lhol-impd,eq:rw-lhol-arrd}. For
  the Equation \labelcref{eq:pcert-eq-pair-pi}, we have $\ecZpair t\,
  p\, m\, h_0\hookrightarrow_R \ecZpairp t\, p\, m\hookleftarrow_R
  \ecZpair t\, p\, m\, h_1$. Finally, for the
  Equation \labelcref{eq:pcert-eq-prol}, we have $\ecZfst t_{0}\, p_{0}\,
  (\ecZpair t_{1}\, p_{1}\, m\, h)\hookrightarrow_R \ecZfst t_{0}\,
  p_{0}\, (\ecZpairp t_{1}\, p_{1}\, m) \hookrightarrow_R m$.\qedhere
\end{proof}

\begin{remark}
  \begin{itemize}
    \item Rewrite system \(R\) is confluent because it is orthogonal.
    \item Termination of \(R\) is required to obtain the decidability of
    \(\rlZcov_{R}\). A possible approach to prove it would be to extend
    the termination model of \lhol described in~\cite{dowek17icalp}.
    \item In order to prove the completeness of the encoding, that is, the
      fact that a type is inhabited whenever its encoding is, it
      could be useful to have the reciprocal implication, that is, if
      \(M \rlZcov_{R} N\) and \(M, N \in \mathcal{T}(\sigPcertEnc,
      \mathcal{S}^{\mathrm{\lambda\Pi}}, \mathcal{V})\), then \(M
      \lpZdeq N\). We leave this for future work too.
  \end{itemize}
\end{remark}

A priori, the introduction of $\ecZpairp$ allows one to craft terms
that cannot be proof checked in \pcert.
Indeed, given a predicate \(\operatorname{Even}\) on natural numbers,
the term \((\ecZpairp \N\, \operatorname{Even}\, 3)\)
is the encoding of
\((\teZpair{\N}{\operatorname{Even}}{3}{h})\)
which cannot be type checked in \pcert since there is no proof \(h\) that 3 is even.
However, \Dedukti relies on a system of modules and tags attached to symbols
to define where and how symbols can be used.
A symbol tagged \emph{protected} cannot be used to build terms outside
of the module where it is defined,
but it may appear during type checking because of conversion, a trick first introduced in \cite{thire19types} and used also for encoding Cumulative Type Systems in \lpme \cite{thire20phd}.
In our case, one may protect \(\ecZpairp\) in the module that defines
the encoding of \pcert, so that users of the encoding are forced to use
\(\ecZpair\).


\section*{Conclusion}

This work provides an encoding of predicate subtyping with proof
irrelevance into the \(\mathrm{\lambda\Pi}\)-calculus modulo theory,
\lpme \cite{assaf19draft}. We first recall \pcert, an extension of
higher-order logic with predicate subtyping and proof irrelevance
\cite{gilbert2018phd}. We then provide a \lpme signature to encode
terms of \pcert, and prove that the encoding is correct: if a \pcert
type is inhabited, then its translation in \lpme is inhabited too.
Finally, we show that the equational theory of our encoding is
equivalent to a confluent set of rewrite rules which enable us to use
\Dedukti to type check encoded specifications.

However, two important problems are left open. First, is our encoding
complete, that is, is a \pcert type inhabited if its translation is?
Second, is the confluent rewrite system used in the encoding
terminating? We believe that these two properties hold but leave their
difficult study for future work.

\paragraph*{Perspectives}

The encoding of \pcert in \lpmr is the stepping stone towards an automatic
translator from \pvs to \Dedukti.
Indeed, \pvs does not have proof terms in its syntax, and consequently
type checking is undecidable. The creation of \pcert allows to convert
\pvs terms to a syntax whose type checking is decidable.
This was the work of F.~Gilbert in~\cite{gilbert2018phd}.
Now we are able to express this decidable syntax in \lpmr and hence in \Dedukti.
However, the type system proposed here only allows to coerce from
a type to its direct supertype or a subtype, that is,
we can go from \((\teZpsub{(\teZpsub{\iota}{P})}{Q})\) to
\(\teZpsub{\iota}{P}\) in one coercion, but we cannot coerce from
\((\teZpsub{(\teZpsub{\iota}{P})}{Q})\) to \(\iota\),
whereas \pvs can.
Consequently, an algorithm to elaborate the correct sequence of coercions
is needed to obtain terms that can be type checked in \Dedukti.

Other features of \pvs can be integrated into \pcert and the
encoding: dependent types like \((\teZpsub{\,list}{(\teZabst{\ell}{\operatorname{length}\,\ell = n})})\),
recursive definitions of functions, and dependent records.
With those features encoded, almost all
the standard
library\footnote{\url{ http://www.cs.rug.nl/~grl/ar06/prelude.html }} of
\pvs can be translated to \Dedukti.

Finally, while the previous points were concerned with the translation of
specifications from \pvs,
we may also want to translate proofs developed in \pvs.
These proofs are witnesses of \emph{type correctness conditions} (\tcc), which are required to type check terms.
Since \pvs is a highly automated prover, proof terms often come from application
of complex tactics that cannot be mimicked into \Dedukti.
However, proof terms may either be provided by hand, emulating the interaction
provided by \tcc's, or we may call external solvers
\cite{ElHaddad19Ekstrakto}.

\subparagraph*{Acknowledgements} The authors thank Gilles Dowek and the
anonymous referees very much for their remarks.


\bibliography{translation}
\end{document}